\documentclass[letterpaper,twocolumn,10pt]{article}
\usepackage{graphicx}
\usepackage{amsmath,amssymb,amsthm,mathtools}
\usepackage{xspace}
\usepackage{placeins}
\usepackage{booktabs}
\usepackage{tikz} 
\usetikzlibrary{decorations.pathreplacing}
\usepackage[utf8]{inputenc}
\usepackage{algorithm}
\usepackage[noend]{algpseudocode}
\usepackage{times} 
\IfFileExists{usenix-2020-09.sty}{\usepackage{usenix-2020-09}}{}

\usepackage{etoolbox}
\newbool{publicversion}
\newbool{tr}
\newbool{optanon}
\newbool{compress}
\newbool{usenix}
\newbool{sigplan}
\newbool{sigconf}
\newbool{lipcs}
\newbool{sigplan-danger}

\setbool{tr}{true}
\setbool{publicversion}{false}
\setbool{compress}{true}
\setbool{optanon}{true}
\setbool{usenix}{true}
\setbool{sigplan}{false}
\setbool{sigconf}{false}
\setbool{lipcs}{false}
\setbool{sigplan-danger}{false}

\ifbool{sigplan-danger}{
\usepackage{array}
\usepackage{appendix}
\usepackage{booktabs} 
\usepackage{blindtext}
\usepackage[font={it,small}]{caption,subcaption}
\usepackage{cancel}
\usepackage{changepage}
\usepackage[xcolor]{changebar}
\usepackage[english]{babel}
\usepackage{epsfig}
\usepackage{endnotes}
\usepackage{lipsum} 
\usepackage{makecell} 
\usepackage{mdframed}
\usepackage{parskip}
\usepackage{tikz}
\usetikzlibrary{positioning}
\usetikzlibrary{patterns}
\usepackage[normalem]{ulem}
\usepackage{xurl}
}
\usepackage{algorithm}
\usepackage{mathrsfs}
\usepackage{amsthm}
\usepackage{amsmath}
\usepackage{balance}
\usepackage{color}
\usepackage{comment} 
\usepackage[shortlabels]{enumitem} 
\usepackage[export]{adjustbox}
\usepackage{etoolbox}
\usepackage{listings}
\usepackage{longfbox}
\usepackage{geometry}
\usepackage{hyperref} 
\usepackage[utf8]{inputenc}
\usepackage{multirow}
\usepackage{pifont}
\usepackage[colorinlistoftodos]{todonotes} 
\usepackage[compact]{titlesec}
\usepackage{setspace}
\usepackage{textcomp}
\usepackage{url}
\usepackage{xspace}
\usepackage{xcolor}
\usepackage{soul}

\ifbool{usenix}{
    \usepackage[numbers,sort&compress,square]{natbib}
    \usepackage[normalem]{ulem} 

    \setlength{\textheight}{9.0in}
    \setlength{\columnsep}{0.33in}
    \setlength{\textwidth}{7.00in}
    
    \setlength{\topmargin}{0.0in}
    
    \setlength{\headheight}{0.0in}
    
    \setlength{\headsep}{0.0in}
    
    \addtolength{\oddsidemargin}{-0.25in}
    \addtolength{\evensidemargin}{-0.25in}

}{
    \ifbool{sigplan} {
        \settopmatter{printacmref=false} 
        \settopmatter{printfolios=true} 
        \renewcommand\footnotetextcopyrightpermission[1]{} 
      
        \pagestyle{plain} 
    }{}
}

\definecolor{gold}{RGB}{218,165,32}
\definecolor{lightblue}{rgb}{0.6,1.0,0.6}
\definecolor{codegreen}{rgb}{0,0.6,0}
\definecolor{gray}{rgb}{0.5,0.5,0.5}
\definecolor{mauve}{rgb}{0.58,0,0.82}

\ifbool{publicversion}{
  \newcommand{\grumbler}[2]{}
  \newcommand{\grumblermargin}[2]{}
  
  \newcommand{\pagelimit}[1]{}
  \newcommand{\todos}[1]{}
} 
{
  \newcommand{\grumbler}[2]{{{\bf #1:} #2}}
  
 \newcommand{\grumblermargin}[3]{\todo[size=\tiny,color=#3, nolist]{{\bf #1:} #2}}
  \newcommand{\pagelimit}[1]{{\hl{ #1 pages }}}

  \newcommand{\todos}[1]{\todo[size=\small, inline]{#1}}
}

\algdef{SE}[EVENT]{Event}{EndEvent}[1]{\textbf{upon}\ #1\ \algorithmicdo}

\ifbool{compress}{
  \setlist[itemize]{leftmargin=*}
  \setlist[enumerate, 1]{1.} 
  \setlist{itemsep=0pt,parsep=2pt}             
  \usepackage[subtle,wordspacing=normal]{savetrees} 
  \setlength{\parskip}{1pt plus 1pt minus 1pt} 
  \setlength{\parindent}{5pt} 
  \setlength\abovedisplayskip{2pt plus 2pt minus 2pt}
  \setlength\belowdisplayskip{2pt plus 2pt minus 2pt}
  \setlength{\abovecaptionskip}{1pt plus 2pt minus 2pt} 
  \setlength{\belowcaptionskip}{1pt plus 2pt minus 2pt} 
  \setlength\floatsep{2pt plus  2pt minus 2pt}
  \setlength\textfloatsep{3pt plus 2pt minus 2pt}
  \titlespacing{\section}{0pt}{6pt plus 2pt minus 2pt}{3pt plus 1pt minus 1pt} 
  \titlespacing{\subsection}{0pt}{1ex}{0.33ex}  
  \titlespacing{\subsubsection}{0pt}{0.6ex}{1ex}
} 
{
\setlist{itemsep=2pt,parsep=1pt}             
}

 \ifbool{optanon}%
   {}%
   {}
 
 \ifbool{tr}{
    
}
{
    
}

\newtheoremstyle{indented}
{1pt}   
{1pt}   
{\itshape}  
{1pt}       
{\bfseries} 
{}         
{2pt plus 1pt minus 1pt} 
{}          

\newtheoremstyle{noindent}
{2pt}   
{2pt}   
{\itshape}  
{0pt}       
{\bfseries} 
{}         
{2pt plus 1pt minus 1pt} 
{}          

\newtheoremstyle{definition-compressed}
{2pt}   
{2pt}   
{\itshape}  
{0pt}       
{\bfseries} 
{}         
{2pt plus 1pt minus 1pt} 
{}          

\theoremstyle{plain}
\newtheorem{lemma}{Lemma}
\newtheorem{theorem}{Theorem}
\theoremstyle{noindent}
\newtheorem{lemma-noindent}{Lemma}
\newtheorem{theorem-noindent}{Theorem}
\theoremstyle{definition-compressed}

\fboxset{rounded}

\addcontentsline{toc}{chapter}{Appendix}

\lstdefinelanguage{thrift}
    {morekeywords={struct,Distribution},
    sensitive=false,
    morecomment=[l]{//},
    morecomment=[s]{/*}{*/},
    morestring=[b]",
}

\lstdefinestyle{placement}{
  float=tp,
  floatplacement=tbp,
}

\tikzset{
    position/.style args={#1:#2 from #3}{
        at=(#3.#1), anchor=#1+180, shift=(#1:#2)
    }
}

\usepackage[dvipsnames]{xcolor} 
\usepackage{amsmath} 
\usepackage{todonotes}
\usepackage[most]{tcolorbox}
\usepackage[normalem]{ulem} 

\newif\ifshowrev
\showrevtrue                       

\newtcolorbox{arnewremarkbox}{
    colback=violet!5!white,    
    colframe=violet!75!black,  
    fonttitle=\bfseries,
    title=A,
    fontupper=\small\itshape,
    before upper={$\blacktriangleright$\space},
    after upper={\space$\triangleleft$},
}

\newcommand{\consensus}
{Hermes\xspace}

\newcommand{\SC}{\mathrm{SC}}

\title{\textbf{\consensus: Low Tail-Latency Via Prefix Consensus}}
\author{
{\rm Alejandro Ranchal-Pedrosa}\\
Sei Labs
\and
{\rm Dakai Kang}\\
Sei Labs
\and
{\rm Neil Giridharan}\\
University of California, Berkeley
\and
{\rm Mohammad Sadoghi}\\
University of California, Davis
\and
{\rm Dahlia Malkhi}\\
UC Santa Barbara
\and
{\rm Ben Marsh}\\
Sei Labs, University of Portsmouth}

\date{}
\begin{document}
\newtheorem{corollary}[theorem]{Corollary}     

\maketitle
\begin{abstract}
Leader-based BFT protocols finalize through their leaders: a view whose
leader is crashed or slow finalizes nothing, and the timeout that ends it
admits no good setting. A conservative timeout turns every crashed leader
into a long stall; an aggressive one voids the views of leaders that are
merely slow. Either way the expired view is wasted, and this trade-off, not
the good case, governs tail latency.

Hermes makes expired views finalize. Hermes is a two-round rotating-leader
protocol for $n=5f+1$ processes under partial synchrony built on prefix
consensus: votes carry values ordered by a prefix relation, and quorums
require comparability rather than equality. Every process broadcasts a
justified proposal at view start and casts a single vote, for the leader's
proposal upon delivery or for a fallback proposal at the timeout; there are
no nullify votes. A timely honest leader finalizes its full proposal in
$2\delta$. Otherwise, any $n-f$ votes, which need not match, finalize the
heaviest common prefix. We instantiate Hermes as a finality gadget over an
available chain and over Autobahn-style multi-lane dissemination, where
parent-relative delta tipcuts with explicit skips keep independent proposals
comparable and sender-indexed erasure-coded amplification keeps complexity
at $O(n^2\hat{m}+\lambda n^3)$ bits and $O(n^2)$ messages per view, where $\hat{m}$ is the number of concurrent dissemination lanes in Autobahn and $\lambda$ the security parameter.
\end{abstract}
\section{Introduction}
\label{sec:intro}

Leader-based Byzantine fault tolerant (BFT) protocols order requests in
views, each driven by a designated leader. A long line of work has driven
the view that goes well to its floor: with $n=5f+1$ processes, up to $f$ of
them Byzantine, a timely honest leader finalizes in two message delays
($2\delta$)~\cite{martin2006fast, dutta2005bestcase}, and a recent wave of protocols builds at exactly
this design point~\cite{chou2025minimmit,kniep2025alpenglow,shoup2025kudzu,franca2025chonkybft,shrestha2026hydrangea}. The
good case is, at this point, essentially closed. The tail is not.

Latency in a rotating-leader protocol is a distribution over views. A view
whose leader is crashed, partitioned, or slow finalizes nothing: processes
wait out a timeout, vote to nullify or to change views, and hand the data to
the next leader. The timeout mediates the resulting trade-off, and no
setting serves both sides. Set conservatively, it tolerates slow leaders but
converts every crashed leader into a full timeout of dead time. Set
aggressively, it skips crashed leaders quickly but voids the views of
leaders that are merely slow, a large percentile in geo-distributed
deployments where round-robin leadership visits every latency profile.
Either choice surfaces as tail latency: a request that lands in a bad view
pays the timeout, the view change, and the next leader's round trip before
it can finalize. Recent work attacks the problem by racing the leader path
against recovery~\cite{giridharan2026ambulance}; we attack it with \consensus by making the timeout path
itself finalize.
\begin{table*}[t]
\centering
\small
\begin{tabular}{@{}l c c l l c@{}}
\toprule
Protocol & $n$ & Good case & View with failed leader & Votes range over & Proposers per view \\
\midrule
Minimmit~\cite{chou2025minimmit} & $5f{+}1$ & $2\delta$ & nullified & single value, exact match & leader \\
Raptr~\cite{tonkikh2025raptr} & $3f{+}1$ & $3\delta$ & view change, no output & prefixes of the leader's proposal & leader \\
Floor-IT~\cite{abraham2026floorit}\,$^{\S}$ & $5f{-}1$ & $2\delta$ & view change, no output & single value, exact match & leader \\
Multimmit~\cite{lewispye2026multimmit}\,$^{\dagger}$ & $5f{+}1$ & $2\delta$ & nullified & leader's + per-chain extensions & leader \\
Prefix Consensus~\cite{xiang2026prefix}\,$^{\dagger}$ & $3f{+}1$\,$^{\ddagger}$ & $4\delta$ (amort.) & leaderless & vectors, per-coordinate exact match & all \\
\textbf{Hermes} & $5f{+}1$ & $2\delta$ & \textbf{finalizes HCP at $2\Delta{+}\delta$} & independent proposals, ordered by $\preceq$ & all \\
\bottomrule
\end{tabular}
\caption{Hermes against the closest designs. Good case is post-GST
finalization latency under a timely honest leader; for the leaderless
protocol it is their amortized commit latency in failure-free executions
with synchronized slot starts, in which the slot itself finalizes in
$8\delta$. $\dagger$: concurrent and independent work. $\S$:
signature-free; $5f{-}1$ is the tight threshold in that model.
$\ddagger$: a two-round construction at $n\geq5f{+}1$ is given for the
one-shot primitive; their SMR builds only on the three-round primitive,
tight at $3f{+}1$. Hermes is the only protocol in which the view of a
failed leader itself finalizes.}
\label{tab:intro-comparison}
\end{table*}
The waste in a voided view is larger than it looks. Deployments increasingly
separate dissemination from ordering: DAG mempools and lane-based designs
spread transactions before consensus sequences
them~\cite{danezis2022narwhal,giridharan2024autobahn}, and finality gadgets order blocks that an
available chain has already propagated. 
In these settings a proposal is not an arbitrary value but a prefix of
the stream the dissemination layer has produced, and the honest proposals
within a view differ only in a suffix of network jitter. Yet existing consensus protocols
count votes only when they match. Votes for two proposals that agree
on all but their last few blocks are therefore treated as disagreement and discarded at timeout.

Hermes is a two-round protocol for $n=5f+1$ processes under partial
synchrony whose views finalize with or without their leader. In the first
round, every process broadcasts a justified proposal; the round-robin
leader's is one among $n$. In the second, every process casts a single vote:
for the leader's proposal upon delivering it, or, once a $2\Delta$ timeout
elapses and $n-f$ valid proposals are held, for a fallback proposal among
them. Justifications are assembled from the votes themselves, so a view has
neither extra rounds nor nullifications, with just $2f+1$ distinct, but pairwise comparable votes sufficing to advance the view.

The timeout dilemma dissolves in \consensus. 
An aggressive timeout costs slow leaders only the suffix on which the votes disagree, not the view itself: a set of votes that do not all match, which previous consensus protocols can only use to skip the view, instead finalizes in \consensus the common prefix those votes share, at $2\Delta+\delta$. Skipping a slow leader quickly therefore no longer means wasting a view.

Sections~\ref{sec:model} and~\ref{sec:protocol} present Hermes over an
abstract value space: values ordered by a prefix relation $\preceq$ under
which the prefixes of any value are finitely many and pairwise comparable.
Section~\ref{sec:prefixconsensus} instantiates two different applications depending on the value space, leaving the view
structure untouched. The single-lane instantiation as a finality gadget over
an available chain: values are chain tips and $\preceq$ is ancestry. The multi-lane instantiation orders Autobahn-style lanes while addressing three challenges: 
\begin{itemize}
    \item \emph{Comparability:} deterministic interleaving and parent-relative delta tipcuts with explicit skips keep proposals comparable despite heterogeneous lane speeds. 
    \item \emph{Compactness:} votes and parent links use identifiers, yielding $O(\hat m)$-word proposals and $O(\lambda n^3)$-bit justification forwarding, rather than $O(n^3\hat m)$ bits for full values (quartic when $\hat m=\Theta(n)$). 
    \item \emph{Availability:} because prefix extraction examines the divergent values behind (4f+1) votes, processes erasure-code and amplify available proposals and act only on amplified values. Every justified value is therefore reconstructible by all processes, at $O(n^2\hat m+\lambda n^3)$ bits and $O(n^2)$ messages per view, with no fragment traffic on the good path.
\end{itemize}

Prefix votes are not new, their role here is. Raptr~\cite{tonkikh2025raptr} votes
prefixes of a single leader proposal at $n=3f+1$ to tolerate partial data
availability; Hermes intersects independently proposed histories after the
leader path fails, at the replication two-delay finality already pays. Concurrently, Xiang
et al.~\cite{xiang2026prefix} formalize a leaderless prefix consensus
abstraction through low and high output vectors; Hermes keeps the leader
fast path and extracts one nested history per view. Minimmit~\cite{chou2025minimmit}
is the closest single-value protocol: Hermes keeps its quorum sizes and its
one vote per process per view, and changes two things, every process
proposes and quorums require comparability instead of equality. A view
Minimmit nullifies, Hermes finalizes. Also concurrently, Multimmit~\cite{lewispye2026multimmit} extends
Minimmit across producer chains yet still voids a view whose leader block
cannot be certified.

\begin{figure}[t]
\centering
\resizebox{\columnwidth}{!}{%
\begin{tikzpicture}[>=stealth, font=\footnotesize]
  \foreach \x/\lab in {4/{$2\Delta$}, 5/{$2\Delta{+}\delta$}, 7/{$2\Delta{+}3\delta$}}{
    \draw[dotted, gray] (\x,-0.8) -- (\x,3.0);
    \node[above] at (\x,3.0) {\lab};
  }
  \node[anchor=east] at (-0.2,2.0) {5f+1 BFT~\cite{chou2025minimmit}};
  \draw[->] (0,2.0) -- (8.3,2.0);
  \draw (0,1.9) -- (0,2.1); \node[below] at (0,1.85) {enter $v$};
  \draw (4,1.9) -- (4,2.1); \node[below] at (4,1.85) {nullify votes};
  \filldraw (7,2.0) circle (1.7pt); \node[below] at (7,1.85) {first finalization};
  \draw[decorate,decoration={brace,amplitude=4pt}] (0,2.3) -- (5,2.3)
       node[midway,above=4pt] {view $v$ voided};
  \node[anchor=east] at (-0.2,0.25) {Hermes};
  \draw[->] (0,0.25) -- (8.3,0.25);
  \draw (0,0.15) -- (0,0.35); \node[below] at (0,0.1) {enter $v$};
  \draw (4,0.15) -- (4,0.35); \node[above] at (4,0.4) {fallback votes};
  \filldraw (5,0.25) circle (1.7pt);
  \node[below] at (5,0.1) {HCP of view $v$ finalized, enter $v{+}1$};
  \filldraw (7,0.25) circle (1.7pt);
  \node[above] at (7,0.4) {$v{+}1$ finalized};
\end{tikzpicture}}
\caption{One view with a slow leader, both protocols with the same
$2\Delta$ timeout; time in units of $\delta$, drawn with $\Delta=2\delta$.
The single-value protocol nullifies the view and first finalizes when the
next leader re-proposes, at $2\Delta{+}3\delta$. In Hermes the same timeout
produces fallback votes, any $n{-}f$ of which finalize the heaviest common
prefix of view $v$ at $2\Delta{+}\delta$; those votes justify view
$v{+}1$, whose leader finalizes at $2\Delta{+}3\delta$ in both protocols.
Skipping aggressively loses nothing in the common case.}
\label{fig:dilemma}
\end{figure}
This paper contributes:
\begin{enumerate}
\item Hermes, a two-round rotating-leader protocol at $n=5f+1$ with no
nullification path: $2\delta$ finalization under a timely honest leader, and
finalization of the heaviest common prefix of any $n-f$ votes at
$2\Delta+\delta$ otherwise, with proofs of safety and liveness over an
abstract prefix value space (Sections \ref{sec:model}, \ref{sec:protocol}, \ref{sec:proofs}).
\item A single-lane instantiation: a finality gadget over a dynamically
available chain (Section~\ref{subsec:singlelane}).
\item A multi-lane instantiation over Autobahn-style lanes: deterministic
interleaving, parent-relative delta tipcuts with explicit skip
certification, and sender-indexed erasure-coded amplification, at
$O(n^2\hat m+\lambda n^3)$ bits and $O(n^2)$ messages per view
(Section~\ref{subsec:multilane}).
\item An implementation and an evaluation (Section~\ref{sec:eval}).
\end{enumerate}

\section{Model}
\label{sec:model}
We assume a BFT model with $n=5f+1$ processes, where up to $f$ can be Byzantine. All other processes are honest and follow the protocol.

\paragraph{Network.}
All channels are authenticated and pairwise. We assume partial synchrony: before an unknown Global Stabilization Time (GST), message delivery may be arbitrarily delayed; after GST, every message sent by an honest process arrives within a known bound $\Delta$. We write $\delta \le \Delta$ for the actual (unknown) network delay.

\paragraph{Adversary.}
The adversary controls the $f$ Byzantine processes with arbitrary malicious behaviors such as learning their entire state and choosing their messages. The adversary also schedules message delivery subject to the partial synchrony bound. It is computationally bounded and cannot break standard cryptographic primitives.

\paragraph{Cryptographic assumptions and parameters.}
We assume digital signatures: every process can sign messages, and any process can verify any other process' signatures. A collision-resistant hash function $H$ backs every reference to a value that is not the value itself: block references within chains, the parent identifiers of Section~\ref{subsubsec:threeindex}, and the coded identifiers of Section~\ref{sec:amplification}. 

\paragraph{Values.} We write $c$ for the size of a proposal. Proposals carry \emph{values} from a space equipped with a prefix relation $\preceq$: $B \preceq B'$ reads $B$ is a prefix of $B'$, and two values are \emph{pairwise comparable} if one is a prefix of the other. We require two properties of the space, both immediate when values are tips of hash chains and $\preceq$ is the ancestor relation: the prefixes of any single value are pairwise comparable and finitely many, and a distinguished \emph{genesis} value is a prefix of every value. 
Section~\ref{sec:prefixconsensus} instantiates the value space for a single chain (e.g. as a finality gadget for an available chain) and for multi-lane dissemination (e.g. consensus preceded by Autobahn DA~\cite{giridharan2024autobahn}), changing the representation of values and of the objects below.
The relation $\preceq$ is a partial order; in particular it is reflexive, and \emph{extends} includes equality throughout.

\paragraph{Heaviest Common Prefix and Selected Continuation.}
\label{subsubsec:hcp-single}
For a value $x$ and a multiset $\mathcal{S}$ of voted values, define the support of $x$ inside $\mathcal{S}$ as
\[
    \mathrm{supp}_{\mathcal{S}}(x)
    =
    \bigl|\{B \in \mathcal{S} : x \preceq B\}\bigr|,
\]
counted with multiplicity: two votes endorsing the same value contribute two.
For a multiset $\mathcal{S}$ of values endorsed by $4f+1$ votes from distinct processes, the \emph{heaviest common prefix} and the \emph{selected continuation} are
\[
    \mathsf{HCP}(\mathcal{S})
    :=
    \max_{\preceq}
    \bigl\{x : \mathrm{supp}_{\mathcal{S}}(x)=|\mathcal{S}|\bigr\},
\]
\[\mathrm{SC}(\mathcal{S})
    :=
    \max_{\preceq}
    \bigl\{x : \mathrm{supp}_{\mathcal{S}}(x) \ge 2f+1\bigr\}.
\]
Equivalently, $\mathsf{HCP}(\mathcal{S})$ is the greatest common prefix of the full vote set, and $\mathrm{SC}(\mathcal{S})$ is the $\preceq$-greatest among the greatest common prefixes of the $(2f{+}1)$-vote subsets of $\mathcal{S}$.

Both maxima are well defined, with no tie-breaking involved. Any two values with support at least $2f+1$ in $\mathcal{S}$ are comparable: their supports total at least $4f+2 > |\mathcal{S}|$, so some value of $\mathcal{S}$ lies in both. The set of values with support at least $2f+1$ is therefore totally ordered by $\preceq$. Full support is the special case $4f+1 \ge 2f+1$, so the same holds for $\mathsf{HCP}(\mathcal{S})$.

Intuitively, $\mathsf{HCP}(\mathcal{S})$ is the prefix that is safe to finalize from the $4f+1$ votes, while $\mathsf{SC}(\mathcal{S})$ is the continuation that later proposals must extend in subsequent views.

\section{\consensus}
\label{sec:protocol}
\consensus runs in views, each with a publicly known round-robin leader $L_v = v \bmod n$. The design goal is that every view finalizes a new prefix irrespective of slow leaders: the leader's full value when the leader is timely, and the heaviest common prefix of the votes cast in the view otherwise. The view timer marks which of the two a view finalizes, diminishing the usual tension between a long timeout that tolerates some slow leaders without skipping (but takes longer to skip views by even slower or crashed leaders) and a short one that creates more skipped views (but replaces them more quickly). Thanks to finalizing the heaviest common prefix, it is possible to target an aggressive timeout at the expense of the tail of slow leaders, while at the same time reducing the probability that a view produces no new finalized decision to rare edge cases.

\paragraph{Protocol overview.}
A view has two rounds. In Round~1, every process $p_i$ broadcasts its own justified proposal $B_i$; the leader's proposal is one among $n$. This all-to-all burst supplies the material from which a view finalizes when the leader path does not complete. In Round~2, every process casts a single vote: for the leader's proposal upon delivering it, or, once the $2\Delta$ timeout has elapsed and $n-f$ valid proposals have been received, for a fallback proposal among them. There are no nullify votes.

A process $p_i$ may enter view $v{+}1$ only if it holds a \emph{justification} $J^{(v)}$ from view $v$ that is one of:
\begin{enumerate}[(i)]
  \item \textbf{M-notarization:} $2f{+}1$ pairwise comparable votes from distinct processes in view $v$; or 
  \item \textbf{L-notarization:} \emph{any} set of $n{-}f=4f{+}1$ votes from distinct processes in view $v$.
\end{enumerate}
An L-notarization moreover finalizes the heaviest common prefix of its votes. 

A justification from view $v$ also determines what view $v{+}1$ may build on: a proposal for view $v{+}1$ is \emph{justified} by $J^{(v)}$ only if it extends
\begin{enumerate}[(i)]
  \item the \emph{M-notarized tip}, if $J^{(v)}$ is an M-notarization: the $\preceq$-greatest of its $2f{+}1$ pairwise comparable votes, which extends all of them; or
  \item the \emph{selected continuation} $\mathrm{SC}(\mathcal{S}_{J^{(v)}})$ of the multiset $\mathcal{S}_{J^{(v)}}$ of values its votes endorse.
\end{enumerate}
Since justifications are assembled from the votes themselves, a view has no more rounds or nullifications. Algorithm~\ref{alg:hermes-psync} gives the full instructions for $p_i$; we walk through it next.
\begin{algorithm*}[h]
\caption{\consensus instructions for $p_i$; Algorithms~\ref{alg:singlelane-hooks} and~\ref{alg:derivesequence} instantiate the auxiliary functions.}
\label{alg:hermes-psync}
\begin{algorithmic}[1]
\State send the first verified L-notarization of each view to all processors \Comment{Forward justifications} \label{line:fwd-lnot}
\State send the first verified M-notarization of each view to all processors \label{line:fwd-mnot}
\State $L_v \gets v \bmod n$ \Comment{Deterministic round-robin leader} \label{line:leader}
\If{proposed $=$ false} \label{line:propose-guard}
    \State $(B_{v-1}, J^{(v)}) \gets \Call{SelectParent}{S, v}$ \label{line:selectparent}
    \State $B \gets \Call{GetChild}{B_{v-1}}$ \label{line:getchild}
    \State send $J^{(v)}$ to all processors if not previously sent \Comment{Forward selected justification}
    \State \textbf{send} (\textsc{Propose}, $B$, $v$) to all processors \label{line:send-propose}
    \State proposed $\gets$ true \label{line:set-proposed}
    \State $\text{timer}_v \gets \text{now}()$ \Comment{Start view timer} \label{line:start-timer}
\EndIf
\If{$S$ contains a valid proposal $B_{L_v}$ for view $v$ from leader $p_{L_v}$} \label{line:leaderprop-guard}
    \If{voted$_v$$=$ false} ; voted$_v$$\gets$ true \label{line:voteleader-guard}
\State \textbf{send} (\textsc{Vote}, $v$, $B_{L_v}$) to all processors \Comment{Vote for leader} \label{line:send-voteleader}
\EndIf
\EndIf
\If{$\text{now}() - \text{timer}_v \geq 2\Delta$ \textbf{and} $S$ contains $C = n - f$ valid proposals for view $v$} \label{line:timeout-guard}
    \If{voted$_v$$=$ false} ; voted$_v$$\gets$ true \label{line:votefallback-guard}
\State \textbf{send} (\textsc{Vote}, $v$, \Call{SelectProposal}{$C$}) to all processors \Comment{Vote fallback after timeout} \label{line:send-votefallback}    \EndIf
\EndIf
\If{$S$ contains an M-notarization $M$ for tip $B$ with $B.\text{view} = v'$} \label{line:mnot-guard}
    \If{voted$_{v'}$ $=$ false} ; voted$_{v'}$ $\gets$ true \label{line:votemnot-guard}
        \State \textbf{send} (\textsc{Vote}, $v'$, $B$) to all processors \Comment{Catch-up vote when $v' > v$, straggler vote when $v' < v$} \label{line:send-votemnot}
    \EndIf
    \If{$v' \geq v$} \label{line:mnot-jump-guard}
        \State $v \gets v' + 1$, proposed $\gets$ false \Comment{Jump} \label{line:advance-mnot}
    \EndIf
\EndIf
\If{$S$ contains a new L-notarization $L$ for view $v'$ with heaviest common prefix $B$} \label{line:lnot-guard}
    \State $H \gets \Call{GetFinalizedHead}{\null}$ \label{line:gethead}
    \If{$H \preceq B$} \label{line:head-check}
        \State \Call{AppendToHead}{$B$} \Comment{Finalize, also when $v' < v$} \label{line:append}
    \EndIf
    \If{$v' \geq v$} \label{line:lnot-jump-guard}
        \If{voted$_{v'}$ $=$ false} ; voted$_{v'}$ $\gets$ true \label{line:votelnot-guard}
            \State \textbf{send} (\textsc{Vote}, $v'$, $\mathrm{SC}(\mathcal{S}_L)$) to all processors \Comment{Jump vote} \label{line:send-votelnot}
        \EndIf
        \State $v \gets v' + 1$, proposed $\gets$ false \Comment{Jump} \label{line:advance-lnot}
    \EndIf
\EndIf
\end{algorithmic}
\end{algorithm*}

\paragraph{Proposing (lines \ref{line:fwd-lnot} to \ref{line:start-timer}).}
Lines \ref{line:fwd-lnot} and \ref{line:fwd-mnot} forward the first L-notarization and the first
M-notarization that $p_i$ verifies for each view. Forwarding brings late processes up
to date and lets justification remain implicit: a proposal need not
attach the votes that justify it. Line \ref{line:selectparent} selects, from the local state $S$, a valid justification from the previous view together with the parent value it induces, and line \ref{line:getchild} forms the proposal as a child extending that parent. 

The justification $J^{(v)}$ that $p_i$ gathers restricts the new proposal $B_i$ to extend the value $J^{(v)}$ justifies, as defined above. Unlike the fallback vote, this restriction carries the safety of the protocol: it is what guarantees that every valid proposal of view $v{+}1$ extends every prefix that view $v$ can finalize.

For readability, Algorithm~\ref{alg:hermes-psync} writes the endorsed value in each \textsc{Vote}; on the wire, a vote carries only an $\mathcal{O}(\lambda)$-bit collision-resistant identifier, whose resolution is specified by the chosen value-space instantiation in Section~\ref{sec:prefixconsensus}.

\paragraph{Voting (lines \ref{line:leaderprop-guard} to \ref{line:send-votefallback}).}
A process votes for the leader's proposal upon delivering it (lines \ref{line:leaderprop-guard} to \ref{line:send-voteleader}). If instead the $2\Delta$ timer expires and $n-f$ valid proposals $C$ have been received, it votes for the fallback proposal $\Call{SelectProposal}{C}$ (lines \ref{line:timeout-guard} to \ref{line:send-votefallback}). The choice of fallback is a progress policy, not a safety condition: safety holds for any choice, and the choice only affects how much a view finalizes when the leader path fails. We prescribe the following choice: each view has a deterministic fallback leader; a process votes for the fallback leader's proposal if delivered timely, and for its own proposal otherwise. After GST, in a view with an honest leader, \consensus obtains the lower-bound on consensus latency of $2\delta$. Before GST, or whenever the leader is faulty or slow, the fallback votes still assemble L-notarizations, so these views keep finalizing heaviest common prefixes. 

\paragraph{Advancing views, catch-up, and finalizing (lines \ref{line:mnot-guard} to \ref{line:advance-lnot}).}
Any justification for a view $v' \geq v$ moves the process to view $v'+1$, so a process behind synchronizes in a single delivery of the highest justification it is missing, which lines \ref{line:fwd-lnot} and \ref{line:fwd-mnot} keep in flight. An M-notarization for a tip $B$ of any view $v'$ first triggers a view-$v'$ vote for $B$ at a process that has not voted in $v'$ (lines \ref{line:votemnot-guard} and \ref{line:send-votemnot}): straggler votes reinforce the notarized value. If $v' \geq v$, the process then jumps to the newer view. A new L-notarization $L$ for any view $v'$ finalizes its heaviest common prefix $B$ whenever $B$ extends the local finalized head (lines \ref{line:gethead} to \ref{line:append}), regardless of the current view. If moreover $v' \geq v$, a process that has not voted in $v'$ casts its view-$v'$ vote for the selected continuation $\mathrm{SC}(\mathcal{S}_L)$, the value proposals justified by $L$ must extend (lines \ref{line:votelnot-guard} and \ref{line:send-votelnot}), and starts the next view (line \ref{line:advance-lnot}). The flags voted$_{v'}$ are initialized to false for every view and never reset.

It is safe for the finalization rule to trigger for any past view $v$ regardless of the current view $v'\geq v$. Preceded by a dissemination layer, the extracted prefix extends the finalized state whenever the layer has delivered new data, so a view is unproductive only when there was nothing new to order or in unlucky cases where there was a split in two proposals from the dissemination layer that share no new prefix.

We specify constructions of \consensus preceded by an available chain and by Autobahn DA in Section~\ref{sec:prefixconsensus}, which instantiates the value space, the parent selection, and the fallback choice of this section, leaving the view structure of Algorithm~\ref{alg:hermes-psync} unchanged.

\FloatBarrier
\section{Prefix Consensus}
\label{sec:prefixconsensus}
The main benefit of \consensus is the relaxation of the trade-off between responsiveness (how fast a view changes) and stability (the number of view changes that produce progress) in cases where it is likely that proposals share a common prefix: if a view changes so quickly that a slow leader cannot propose timely, the backup leader can get proposals finalized; if instead a proposer is fast enough to reach only certain voters timely, introducing divergence, the finalization of the common prefix likely produces a nonempty decision.

This section instantiates a number of value spaces that can be used by the main \consensus protocol of Section~\ref{sec:protocol}.
The propose/vote/view-change flow remains exactly as specified there.
We specify here how values are represented, how \Call{SelectParent}{} and \Call{SelectProposal}{} are instantiated, and how a process extracts a finalized prefix from any $4f+1$ votes in the same view.

Whenever a deterministic choice among $\preceq$-incomparable values is needed (i.e. by equivocation or by different subsets of $4f+1$ votes), we fix a total order $\triangleleft$ on canonically serialized values: concretely, lexicographic order on the hash of the canonical serialization. Any selection mechanism is valid to break ties, and it does not need to be enforced by the protocol.

Throughout, a superscript names the view and a subscript the sender: $B^{(v)}_j$ is the value proposed by $p_j$ in view $v$, and likewise $T^{(v)}_j$ for the tipcuts of Section~\ref{subsec:multilane}.

\subsection{Single-Lane Setting: a Finality Gadget}
\label{subsec:singlelane}

We first present prefix consensus in the single-lane setting, where the data dissemination layer provides each process with a potentially different prefix of a hash chain.
This captures, for example, a finality gadget deployed on top of a dynamically available chain such as Ethereum's.
In this setting, proposals are chain tips and the prefix relation is the standard ancestor relation on a single hash chain.
The multi-lane machinery, deterministic interleaving, parent-relative delta encoding, and explicit skip certification introduced in Section~\ref{subsec:multilane} are unnecessary here, as the data dissemination layer already produces a single chain.

We assume a separate data dissemination layer that operates independently of the consensus protocol.
This layer is responsible for disseminating blocks and provides each process with a potentially different sequence of blocks.
Due to network delays and adversarial behavior, processes may observe different block sequences at the time of proposing: proposals may differ in length (one contains more blocks than another) or in content (distinct blocks at the same height).

A proposal can therefore be represented compactly by its tip block. We assume valid prefixes can be retrieved via a pull mechanism of the available chain, and omit calls to this mechanism (we describe in Section~\ref{sec:amplification} a push mechanism that retains total bit complexity). Formally, a value is a chain tip $b$ that implicitly encodes the sequence
\[
    \mathcal{L}(b) = (b_1,b_2,\ldots,b_h),
\]
where $h$ is the height of $b$ and each block contains the hash of its predecessor.
For two values $B$ and $B'$, we write $B \preceq B'$ when the tip of $B$ is an ancestor of the tip of $B'$, or equivalently when $\mathcal{L}(B)$ is a prefix of $\mathcal{L}(B')$.
The axioms of Section~\ref{sec:model} hold in this space: the prefixes of a tip are its ancestors, finitely many and totally ordered, with the genesis block as minimum, provided no two distinct blocks share a hash, which is where collision resistance enters.
Height realizes the $\preceq$-maximum: $\mathrm{SC}(S)$ is the highest value with support at least $2f+1$ in $S$, and $\mathsf{HCP}(S)$ the highest value that all of $S$ extends.

We define the parent induced by a single justification $J$ with endorsed value multiset $S_J$ as
\[
\textsc{ParentOf}(J)=
\begin{cases}
b^{\ast}, & \text{if } J \text{ is an M-notarization of tip } b^{\ast},\\
\mathrm{SC}(S_J), & \text{if } J \text{ is an L-notarization.}
\end{cases}
\]
Thus an M-notarization preserves the usual fast path: since all $2f+1$ votes are pairwise comparable, the next proposal must extend the highest tip among them.
An L-notarization instead requires the next proposal to extend the unique selected continuation it justifies. Two honest processes holding the same justification compute as a result the same parent.

The main protocol invokes the two-argument form $\textsc{SelectParent}(S,v)$, which selects a valid justification $J$ from view $v-1$ held in the local state $S$ and returns the pair $(\textsc{ParentOf}(J),\,J)$.
Safety does not depend on which justification is selected: the parent induced by any valid justification from view $v-1$ extends every prefix finalizable in view $v-1$.
For determinism we prescribe that $p_i$ selects an M-notarization if it holds one, and otherwise the L-notarization whose selected continuation is $\triangleleft$-minimal among those it holds.

Algorithm~\ref{alg:singlelane-hooks} shows the instantiation of these functions for the single-lane setting.

\subsubsection{Fallback Vote Selection}
\label{subsubsec:selectproposal-single}

When a process $p_j$ collects $n-f$ valid proposals in a view and the leader's proposal is unavailable, it must choose a fallback vote via \Call{SelectProposal}{$C$}, where $C$ is the set of received proposals.
Safety does not depend on the exact choice of \Call{SelectProposal}{}; it depends only on the subsequent $\mathsf{HCP}$ extraction and on \Call{SelectParent}{}.
Nevertheless, the choice affects how much prefix progress the fallback path typically achieves.

The default of Section~\ref{sec:protocol} is the $c_b=1$ instance of a \emph{ranked rule}: the backups of view $v$ are the $c_b$ indices that follow the leader in the round-robin order, and a process votes for the valid proposal of the highest-ranked backup present in $C$, falling back to its own proposal otherwise. Every process selects a proposal it already holds, so the rule introduces no constructed values.
In any view whose preferred live leader is honest and timely, all honest fallback votes coincide and its full value finalizes within the view. The self vote guarantees a votable, justified value in every view regardless, so no view is ever voided, unlike designs that restrict proposing and must void barren views by nullification. Section~\ref{sec:amplification} carries the rule over to the multi-lane setting unchanged:
the rebroadcast rules there make every fallback vote endorse a value
already amplified at its sender.

\subsubsection{Proposal Validity and Commit Rule}
\label{subsubsec:validity-single}

A proposal $B^{(v)}_j$ is \emph{valid}, in the sense checked at the vote steps of the protocol algorithm of Section~\ref{sec:protocol}, if there exists a valid justification $J^{(v-1)}$ from view $v-1$ such that
\[
    \Call{ParentOf}{J^{(v-1)}} \preceq B_j^{(v)}.
\]
If justifications are disseminated separately as shown in Algorithm~\ref{alg:hermes-psync}, a receiver does not reject a proposal merely because the corresponding justification has not yet arrived; it buffers the proposal and validates it once the relevant M-notarization or L-notarization is learned. If an implementation chooses to attach justifications explicitly, the same validity rule applies, excepting buffering.

Given any set of $4f+1$ votes from distinct processes in the same view, let $\mathcal{S}$ be the multiset of values they endorse.
A process finalizes $\mathsf{HCP}(\mathcal{S})$.
This does not remove the usual M-notarized fast path: if the $4f+1$ votes contain an M-notarization, then the M-notarized value is itself an extension of the finalized prefix extracted from that same view. The $4f+1$ votes themselves constitute a valid L-notarization for that view, and a process that finalizes from them stores and forwards them.

\begin{algorithm*}[h]
\caption{Single-lane instantiation of the value-space hooks for $p_i$}
\label{alg:singlelane-hooks}
\begin{algorithmic}[1]
\Function{ParentOf}{$J$}
  \If{$J$ is an M-notarization of tip $b^\ast$}
    \State \Return $b^\ast$
  \EndIf
  \State let $S_J$ be the multiset of values endorsed by $J$ \Comment{$J$ is an L-notarization}
  \State \Return $\mathrm{SC}(S_J)$
\EndFunction
\Function{SelectParent}{$S,v$}
  \State $J \gets$ an M-notarization from view $v-1$ in $S$ if one is held, otherwise the L-notarization from view $v-1$ whose selected continuation is $\triangleleft$-minimal among those held
  \State \Return $(\textsc{ParentOf}(J),\, J)$
\EndFunction
\Function{GetChild}{$P$}
\State let $b_i^{(v)}$ be the local available-chain tip
\If{$P \preceq b_i^{(v)}$}
\State \Return $b_i^{(v)}$
\EndIf
\State \Return $P$ \Comment{no local extension is available; re-propose the parent}
\EndFunction

\Function{SelectProposal}{$C$}
  \For{$k=1$ \textbf{to} $c_b$} \Comment{backups follow the leader in round-robin order; default $c_b=1$}
    \If{$C$ contains a valid proposal $B$ from $p_{(v+k)\bmod n}$}
      \State \Return $B$
    \EndIf
  \EndFor
\State \Return $B_i$ \Comment{the proposal $p_i$ sent in view $v$: held, valid, and broadcast in full}\EndFunction
\Function{ExtractHCP}{$V$} \Comment{$V$: $4f{+}1$ same-view votes}
  \State let $\mathcal{S}$ be the multiset of values endorsed by $V$
  \State \Return $\mathsf{HCP}(\mathcal{S})$
\EndFunction
\end{algorithmic}
\end{algorithm*}

\subsection{Multi-Lane Setting: Autobahn-Style Data Availability}
\label{subsec:multilane}

We now consider a data dissemination layer with $\hat m$ concurrent lanes, as in Autobahn.
Lane $\ell \in \{1,\ldots,\hat m\}$ produces a hash chain of blocks
\[
    \mathcal{L}_\ell = (b_{\ell,1}, b_{\ell,2}, \ldots),
\]
indexed by \emph{height}: the height of a block is its index in its lane's hash chain, assigned by the lane's producer, contiguous, and never skipped.\footnote{Height corresponds to the block number of a lane's producer in Autobahn implementations.}
We write $h_{\ell,j}$ for the hash of $b_{\ell,j}$, the block at height $j$ of lane $\ell$.
Different processes may observe different height prefixes of each lane.
The challenge is therefore not only to identify a safe prefix, but also to derive a \emph{consistent total order} across lanes.

\subsubsection{Deterministic Interleaving}
\label{subsubsec:interleaving}
We fix a global deterministic interleaving of lane positions.
For example, if the next unresolved position in lane $\ell$ is $p_\ell$, the interleaving order proceeds as
\[
    (1,p_1), (2,p_2), \ldots, (\hat m,p_{\hat m}),
    (1,p_1+1), (2,p_2+1), \ldots .
\]
Any other fixed total order would also suffice~\cite{giridharan2024autobahn}.

We call the pair $(\ell,p)$ a \emph{position}: the $p$-th slot of lane $\ell$ in the deterministic interleaving.
Positions are assigned by consensus and, unlike heights, skippable: consensus resolves every position either by binding a block of its lane to it or by explicitly certifying a skip.
The tipcut indices of Section~\ref{subsubsec:threeindex} are position-valued.
Heights and positions of a lane coincide until the lane's first skip and diverge by one per skip thereafter; Section~\ref{subsubsec:threeindex} states the exact mapping.

The key structural property is that this interleaving creates \emph{cross-lane dependencies}.
A process walking through the total order cannot finalize a later position before all earlier positions in the interleaving have already been resolved, either by committing a block there or by explicitly certifying a skip there.
This is precisely what prevents the unsafe behavior of independently finalizing each lane and only later attempting to merge the results.

\subsubsection{Parent-Relative Delta Tipcuts}
\label{subsubsec:threeindex}

To keep proposals and votes compact while still making skips explicit, we represent a multi-lane value by a \emph{parent-relative delta tipcut}.
Whenever a process validates a previous-view justification $J$, it computes
\[
    P = \Call{ParentOf}{J}
\]
and stores $P$ in a local cache under the identifier
\[
    \mathsf{pid}(P) := H(\sigma(P)),
\]
where $\sigma(P)$ is the derived sequence of $P$ (as defined below).
A proposal or vote in the next view may then refer to $P$ only through $\mathsf{pid}(P)$.
If a receiver learns the proposal before it learns the corresponding justification, it buffers the proposal until it can resolve the parent identifier, exactly as in the general implicit-justification model (and excepting buffering as in the explicit-justification one).

For process $p_i$ in view $v$, a multi-lane value has the form
\[
    T_i^{(v)}
    =
    \bigl\langle
        \mathsf{pid}_i^{(v-1)},
        (u_i^1,b_i^1,z_i^1),\ldots,(u_i^{\hat m},b_i^{\hat m},z_i^{\hat m})
      \bigr\rangle,
\]
where $\mathsf{pid}_i^{(v-1)}$ names the parent value selected from some valid justification from view $v-1$.
For each lane $\ell$:
\begin{enumerate}
    \item $u_i^\ell$ is the \emph{parent frontier} in lane $\ell$, i.e., the last position in that lane already resolved by the parent named by $\mathsf{pid}_i^{(v-1)}$;
    \item $b_i^\ell$ is the last position in lane $\ell$ for which $p_i$ proposes positive content above that parent frontier: the blocks bound there are, in order, the next $b_i^\ell - u_i^\ell$ blocks of lane $\ell$ by height;
    \item $z_i^\ell$ is the last position in lane $\ell$ that $p_i$ explicitly proposes to skip above that parent frontier.
\end{enumerate}
These indices satisfy
\[
    u_i^\ell \le b_i^\ell \le z_i^\ell.
\]
We write $\eta^\ell(P)$ for the \emph{height frontier} of the parent $P$ in lane $\ell$: the number of block entries of lane $\ell$ in $\sigma(P)$, equivalently the height of the last block of lane $\ell$ bound in $\sigma(P)$.
The position frontier $u_i^\ell$ and the height frontier $\eta^\ell(P)$ differ by exactly the number of skip entries of lane $\ell$ in $\sigma(P)$.
The positive segment of lane $\ell$ binds, in order, the blocks at heights $\eta^\ell(P)+1,\ldots,\eta^\ell(P)+(b_i^\ell - u_i^\ell)$; their hashes are carried, as usual, by a reference to the lane tip $h_{\ell,\,\eta^\ell(P)+(b_i^\ell-u_i^\ell)}$, and we suppress them in the notation.

The semantics of the suffix are \emph{blocks then skips}.
For lane $\ell$, the positions
\[
    u_i^\ell+1,\ldots,b_i^\ell
\]
contribute blocks, while the positions
\[
    b_i^\ell+1,\ldots,z_i^\ell
\]
contribute explicit skip markers.
This is the physically realizable shape of one view's contribution above a parent frontier in a hash-chained lane: available data forms a contiguous prefix, and any remaining addressed positions are explicit skips.

A receiver rejects any tipcut in which $z_i^\ell - u_i^\ell$ differs across lanes, since this would leave some interleaving positions unaddressed.

The wire format is therefore $O(\hat m)+O(1)$: one parent identifier plus three integers per lane.
The parent identifier is the semantic anchor, while the announced parent frontiers $u_i^\ell$ serve as redundant validation metadata and a fast filter for the receiver.
Section~\ref{sec:amplification} uses this terminology alongside hash-based votes to keep cubic bit complexity where $\mathcal{O}(\hat m )=\mathcal{O}(n)$.

\paragraph{Deriving a single sequence from a tipcut.}
Every parent-relative delta tipcut deterministically expands to a single sequence over block entries and skip entries.
Let $P$ be the cached parent named by $\mathsf{pid}$, and let $\sigma(P)$ denote the single sequence already resolved by that parent.
Let $\rho = z^\ell-u^\ell$ be the common reach (constant across lanes by the validity check above).
The child tipcut contributes one entry per lane at each offset $k=1,\ldots,\rho$, in the deterministic interleaving order.
For each lane, the entry is a block if the corresponding position is at most $b^\ell$, and a skip otherwise; block entries bind consecutive heights above the parent's height frontier.

\begin{algorithm*}[h]
\caption{Deriving a single sequence from a parent-relative delta tipcut}
\label{alg:derivesequence}
\begin{algorithmic}[1]
\Function{DeriveSequence}{$P,T$}
    \Comment{$T = \langle \mathsf{pid},(u^\ell,b^\ell,z^\ell)\rangle_{\ell=1}^{\hat m}$; $P$ is the cached parent named by $\mathsf{pid}$}
    \State verify that the per-lane position frontiers of $P$ equal $(u^1,\ldots,u^{\hat m})$
    \State verify that there exists $\rho \ge 0$ such that $z^\ell=u^\ell+\rho$ for every lane $\ell$
    \State $\sigma \gets \sigma(P)$
    \For{$\ell = 1$ to $\hat m$}
        \State $\eta^\ell \gets \eta^\ell(P)$ \Comment{height frontier of the parent, cached with $P$}
    \EndFor
    \For{$k = 1$ to $\rho$}
        \For{$\ell = 1$ to $\hat m$}
            \State $p \gets u^\ell + k$
            \If{$p \le b^\ell$}
                \State $\eta^\ell \gets \eta^\ell + 1$
                \State append $(\ell,p,h_{\ell,\eta^\ell})$ to $\sigma$ \Comment{block at height $\eta^\ell$ bound to position $p$}
            \Else
                \State append $(\ell,p,\bot)$ to $\sigma$ \Comment{position $p$ skipped; heights unaffected}
            \EndIf
        \EndFor
    \EndFor
    \State \Return $\sigma$
\EndFunction
\end{algorithmic}
\end{algorithm*}

We write $\sigma(T)$ for the output of \Call{DeriveSequence}{$P,T$} once the parent named by $\mathsf{pid}$ has been learned.
A process derives the sequence of a value once, upon first needing it, and caches the result; every prefix test, $\mathsf{HCP}$, $\mathrm{SC}$, and $\mathsf{pid}$ computation evaluates on cached sequences.
By definition, a later value extends an earlier one only if it preserves this entire derived sequence as a prefix.
In particular, if a position already appears in the parent as either a block or a skip, descendants do not revisit that position.
Thus a position skipped along an M-notarized chain is locked in for that chain: data arriving late for that position binds to a later position rather than being reinserted below the parent frontier.
Skipping a position of lane $\ell$ does not mean the lane stops producing, nor that any of its blocks are discarded: the block that would have occupied the skipped position binds instead to the next position resolved for that lane.
Within a lane, heights map to positions injectively and order preservingly, but not surjectively: each skip shifts every later height of the lane one position further.
Precisely, every block entry of $\sigma$ binding the block at height $j$ of lane $\ell$ to position $p$ satisfies $p = j + s$, where $s$ is the number of skip entries of lane $\ell$ at positions below $p$ in $\sigma$.

A natural question is why the parent-relative delta encoding with explicit indices is necessary, rather than a single tip index per lane as in Autobahn's native representation, with the previous-view justification attached to every proposal so that the receiver reconstructs the parent frontier and omits locally unavailable offsets. The attached encoding is simpler, one integer per lane and no parent cache, but loses three times. Without an explicit parent reference via $\mathsf{pid}$ and per-lane resolution state via $u^\ell$, prefix consensus may notarize different tipcuts across consecutive views, leaving ambiguous which interleaving history a proposal extends. Without explicit skips, two proposals differing in lane coverage at any offset derive incomparable sequences, obstructing fallback M-notarizations under heterogeneous lane speeds, and a lane missing a single position stalls entirely until the next finalization resets the base. And the inline justification costs $2f+1$ or $4f+1$ signatures per proposal where the delta proposal stays $O(\hat m)$. 
\paragraph{Example.}
We illustrate the encoding across two consecutive views: how the common reach forces an explicit skip when one lane lags, how each tipcut expands via \Call{DeriveSequence}{} into the interleaved sequence, and how a skipped position, once below the parent frontier, is locked in while the lane's late block binds to a later position.

Consider $\hat m = 2$ lanes and a parent $P$ (named by $\mathsf{pid}$) whose derived sequence contains no skips, so that position and height frontiers coincide: $u^1 = \eta^1 = 2$ and $u^2 = \eta^2 = 0$.

\medskip\noindent\emph{View $v$.}
process $p_i$ holds lane~1 through height~4 and lane~2 through height~1.
The common reach forces $\rho = 2$: lane~1 fills both offsets with blocks, lane~2 fills one.
The tipcut is
\[
    T_i = \bigl\langle \mathsf{pid},\;
    (u^1{=}2,\, b^1{=}4,\, z^1{=}4),\;
    (u^2{=}0,\, b^2{=}1,\, z^2{=}2)
    \bigr\rangle.
\]
Lane~1 binds the blocks at heights~3 and~4 to positions~3 and~4.
Lane~2 binds the block at height~1 to position~1 and skips position~2.
The suffix appended to $\sigma(P)$ is
\[
    (1,3,h_{1,3}),\; (2,1,h_{2,1}),\;
    (1,4,h_{1,4}),\; (2,2,\bot).
\]

\medskip\noindent\emph{View $v{+}1$.}
Suppose an M-notarization forms on $T_i$ in view~$v$, so that $T_i$ is the parent returned by \Call{ParentOf}{} for the next view, with position frontiers $u^1=4,\,u^2=2$ and height frontiers $\eta^1=4,\,\eta^2=1$: the frontiers of lane~2 have diverged by its one skip.
process $p_i$ now holds lane~1 through height~6 and lane~2 through height~2: the block at height~2, too late for position~2, has arrived.
With $\rho=2$, the tipcut is
\[
    T_i' = \bigl\langle \mathsf{pid}',\;
    (u^1{=}4,\, b^1{=}6,\, z^1{=}6),\;
    (u^2{=}2,\, b^2{=}3,\, z^2{=}4)
    \bigr\rangle.
\]
Lane~1 binds heights~5 and~6 to positions~5 and~6.
Lane~2 binds the block at height~2 to position~3, one position after its height by the invariant above ($3 = 2 + 1$ skip), and skips position~4.
The appended suffix is
\[
    (1,5,h_{1,5}),\; (2,3,h_{2,2}),\;
    (1,6,h_{1,6}),\; (2,4,\bot).
\]
The skip at position~2 is locked in below the parent frontier; no block of lane~2 is discarded, and the lane continues at later positions with its heights intact.

\subsubsection{Prefix Relation and Selected Continuation for Tipcuts}
\label{subsubsec:prefix-multi}

For two tipcuts $T$ and $T'$, we write $T \preceq T'$ if and only if $\sigma(T)$ is a prefix of $\sigma(T')$.

Given $4f+1$ votes from the same view carrying tipcuts $T_1,\ldots,T_{4f+1}$, let $\Sigma$ be the multiset of their derived sequences.
The definitions of Section~\ref{sec:model} apply verbatim: $\mathsf{HCP}(\Sigma)$ is the longest sequence that every sequence in $\Sigma$ extends, and $\mathrm{SC}(\Sigma)$ is the longest sequence with support at least $2f+1$ in $\Sigma$, both unique, length realizing the $\preceq$-maximum.

This reduces the multi-lane instantiation to the single-lane one: the value-space hooks are those of Algorithm~\ref{alg:singlelane-hooks} applied to resolved sequences. \textsc{ParentOf} returns the $\preceq$-maximal resolved sequence endorsed by an M-notarization, and $\mathrm{SC}(\Sigma_J)$ over the resolved multiset of an L-notarization, together with its per-lane position and height frontiers; \textsc{SelectParent} selects the justification exactly as in Section~\ref{subsec:singlelane}, applying a justification once every tipcut it endorses resolves, and stores the resulting parent in the cache under $\mathsf{pid}$ as prescribed in Section~\ref{subsubsec:threeindex}; \textsc{SelectProposal} is the ranked rule over the valid proposals received; and the commit rule finalizes $\mathsf{HCP}$ over the resolved sequences of any $4f+1$ same-view votes. The cache is initialized with the genesis record, the empty sequence with all frontiers zero, which justifies view $1$.

Only \textsc{GetChild} requires a new construction: to extend a parent $P$, the proposer sets $b^\ell = u_P^\ell + n^\ell$ for each lane $\ell$, where $n^\ell$ counts the consecutive blocks of lane $\ell$ it holds at heights above $\eta^\ell(P)$, and pads every lane to the common reach $\rho = \max_\ell n^\ell$ by $z^\ell = u_P^\ell + \rho$; a view with $\rho = 0$ re-proposes the parent frontier.

\subsubsection{Proposal Validity for Tipcuts}
\label{subsubsec:validity-multi}

A multi-lane proposal
\[
    T_j^{(v)} = \bigl\langle \mathsf{pid},(u_j^\ell,b_j^\ell,z_j^\ell)\bigr\rangle_{\ell=1}^{\hat m}
\]
is valid if there exists a valid justification $J^{(v-1)}$ from view $v-1$ such that, letting
\[
    P = \Call{ParentOf}{J^{(v-1)}},
\]
all of the following hold:
\begin{enumerate}
    \item the proposal carries $\mathsf{pid}(P)$ as its parent identifier;
    \item the advertised position frontiers $u_j^\ell$ match those of $P$ for every lane $\ell$;
    \item there exists a common reach $\rho \ge 0$ such that $z_j^\ell=u_j^\ell+\rho$ for every lane $\ell$;
    \item the positive segment of each lane $\ell$ binds, in order, the blocks at heights $\eta^\ell(P)+1,\ldots,\eta^\ell(P)+(b_j^\ell-u_j^\ell)$ of lane $\ell$'s hash chain; and
    \item under the prefix order on derived sequences,
    \[
        \sigma(P) \text{ is a prefix of } \sigma(T_j^{(v)}).
    \]
\end{enumerate}
Conditions 2, 3, and 5 are enforced by \Call{DeriveSequence}{} itself: it checks the frontiers and the common reach, and it constructs $\sigma(T_j^{(v)})$ by appending to $\sigma(P)$, so condition 5 holds by construction whenever derivation succeeds.
A tipcut is therefore valid for view $v$ if and only if its resolution succeeds and its parent identifier is cached under a justification from view $v-1$; the cache records the views of the inducing justifications precisely to enforce this, since a Byzantine proposer may name a genuinely cached parent from an older view.
A proposal whose parent identifier is unknown is \emph{pending}: the receiver buffers it until the parent is cached.
A proposal that fails any check of \Call{DeriveSequence}{} or condition 4 is \emph{invalid} and discarded.

\subsubsection{Amplification of Proposals}
\label{sec:amplification}

We say a vote is \emph{resolved} at a process once that process holds the endorsed value, in full or by reconstruction. In the good case of \consensus, the leader is honest after GST, votes carry hashes and every value the protocol acts on is backed by a quorum of matching votes, hence by at least $f+1$ honest processes holding the value in full, so a missing value can always be pulled and optimistic waiting costs latency at most. Prefix consensus severs this link between agreement support and availability: the heaviest common prefix is extracted from $4f+1$ votes that need not match. A value may then be used after it is supported by only one vote, potentially by a Byzantine process that may never make the value available.

Since the Byzantine may also make the value only available to a subset of processes, proposers may have to serve values they used for justifications, incurring $\mathcal{O}(n)$ serves per process in the worst case. However, carrying values in full wherever they are referenced also inflates justification forwarding to $O(n^3\hat m)$ bits, also quartic for $\hat m = \Theta(n)$. 

We propose a solution to these challenges via having each process amplify all received proposals in the fallback case via erasure codes, maintaining cubic bit complexity.

\paragraph{Coded values and identifiers.} Fix a deterministic $(t,n)$ erasure code with reconstruction threshold $f+1 \leq t\le 3f+1$ and a binding vector commitment over its $n$ symbols (e.g., a Merkle tree; we write $O(\lambda)$ for an opening, absorbing the logarithmic factor). For a value $T$, let $\mathrm{frag}_k(T)$ be the symbol at index $k$ of the encoding of the canonical serialization of $T$, and let $\mathrm{id}(T)$ be the commitment root over the $n$ symbols. Since the encoding is deterministic, $\mathrm{id}$ is a collision-resistant hash of $T$, and we instantiate every hash of a value in the protocol, including the matching of votes to proposals, as $\mathrm{id}(T)$, so that fragment openings verify against the identifiers votes already carry. Equivocation produces distinct identifiers and is handled by the unchanged quorum logic.

\paragraph{Votes and justifications.} Proposals travel in full in \textsc{Propose}, in the delta tipcut format of Section~\ref{subsubsec:threeindex}. A vote for $B_j$ carries $(\mathrm{id}(B_j))$. Forwarded M-notarizations and L-notarizations are sets of (view, sender, identifier, signature) tuples of size $O(n\lambda)$.

\paragraph{Amplification.}
Fragments are sender indexed: process $p_k$ only ever contributes its own fragment $\mathrm{frag}_k(T)$, and a receiver accepts $(\textsc{amplify}, \mathrm{id}(T), \mathrm{frag}_k(T))$ only from $p_k$, only with a valid opening against $\mathrm{id}(T)$, and at most once per identifier and sender. A full copy of $T$ received from $p_k$ counts as $p_k$'s fragment. A value is \emph{resolved} at a process once the process holds it in full or holds accepted fragments for it from $t$ distinct senders, and \emph{amplified} once the process holds accepted fragments for it from $t+f$ distinct senders or has itself broadcast it in full. Amplification is the transferable notion: $t+f$ accepted senders certify $t$ honest broadcasts, which reach every process; resolution via $t$ fragments certifies as little as one honest broadcast and does not transfer.
Four rules generate amplification traffic:
\begin{enumerate}
\item upon first receiving the view leader's proposal in full, a process
rebroadcasts it in full, at most once per view;
\item upon first receiving a backup proposal in full at or after its
$2\Delta$ timeout, or at the timeout for backup proposals already held, a
process rebroadcasts it in full, at most once per view and backup proposal;
\item every \textsc{Vote} message of $p_i$ additionally carries
$\mathsf{frag}_i(T)$ for every view value $T$ resolved at $p_i$ at sending
time, each identifier at most once ever;
\item at its $2\Delta$ timeout, if it has not left the view, a process
sends each other process a single \textsc{Amplify} message batching
$\mathsf{frag}_i(T)$ for every view value $T$ resolved locally whose
fragment it has not yet sent; and, at any time, $p_i$ broadcasts
$\mathsf{frag}_i(T)$, at most once per identifier ever, upon holding a
value $T$ that is resolved but not amplified and whose identifier is
endorsed by a received vote or notarization.
\end{enumerate}
In the good case the leader is timely: every honest process receives the
leader's proposal directly and rebroadcasts it (rule~1), so the value is
amplified at each process by its own rebroadcast and at every process
within one delivery via $4f{+}1\geq t{+}f$ full-copy senders. The M-notarization is applied before the timeouts, so rule~2 and the
timeout batch of rule~4, which fire only at a timeout reached inside the
view, stay silent, up to the race between an aggressive timeout and the
M-notarization's third delivery; the endorsement relay of rule~4 fires
only for a value that a received vote or notarization endorses before it
amplifies, which does not occur without Byzantine votes. Amplification adds one $O(n^2c)$ rebroadcast and the
piggybacked fragments of rule~3, $O(n(\hat m/t+\lambda))=O(\hat
m+n\lambda)$ per vote, within the \textsc{Propose} and vote budgets.

\paragraph{Fallback selection.}
In the multi-lane instantiation, \textsc{SelectProposal} may select a backup proposal only once its value is amplified at the voter. At the $2\Delta$ timeout, rule~2 runs before selection, so the voter's full-value rebroadcast amplifies every held backup proposal. Its own proposal is already amplified by its own \textsc{Propose} broadcast, ensuring an amplified fallback is always selectable.

\paragraph{Justification.} A process acts on a vote, proposal, or notarization only once the values it depends on are amplified, as the delivery gating below makes precise. This is the availability analogue of the validity rules of Section~\ref{subsubsec:validity-multi}: anything an honest process justifies with, every process can reconstruct. Messages that cannot be delivered are instead buffered until the corresponding values have been amplified.

\paragraph{Jump votes.} A jump vote carries $\mathrm{id}(\SC(S_L))$. Any process holding $L$ computes the derived sequence of $\SC(S_L)$ deterministically from the resolved values of $L$, encodes it, and checks the identifier; the vote is treated as amplified once every value endorsed in $L$ is amplified.

\paragraph{Delivery gating and storage.} A process tallies a vote once its endorsed value is amplified, applies a proposal once every value endorsed by the votes of a justification
validating it is amplified, and applies a notarization, at any guard of
Algorithm~\ref{alg:hermes-psync}, once every value its own votes endorse is
amplified; messages that cannot yet be processed are buffered. Buffered state per view is at most one proposal and one vote per sender, $O(n)$ values of $O(\hat m)$ bits each, garbage collected when a later view finalizes.\section{Complexity \& Latency}
\label{sec:complexity}

Table~\ref{tab:complexity} summarizes the content.

\begin{table*}[t]
\centering
\caption{Per-view communication and post-GST finalization latency of \consensus, measured from view entry; $\delta \le \Delta$ is the actual network delay. Multi-lane values are tipcuts of size $O(\hat m)$; for $\hat m = \Theta(n)$ the amplified protocol yields $O(\lambda n^3)$.}
\label{tab:complexity}
\begin{tabular}{@{}lcccc@{}}
\toprule
& \multicolumn{2}{c}{Communication} & \multicolumn{2}{c}{Finalization latency} \\
\cmidrule(lr){2-3}\cmidrule(l){4-5}
Instantiation & Bits & Messages & Timely leader & Fallback/Prefix\\
\midrule
Generic, values of size $c$ & $O(n^2 c + \lambda n^3)$ & $O(n^2)$ & $2\delta$ & $2\Delta+\delta$ \\
Single lane ($c = O(\lambda)$) & $O(\lambda n^3)$ & $O(n^2)$ & $2\delta$ & $2\Delta+\delta$ \\
Multi lane, full-value forwarding & $O(n^3 \hat m)$ & $O(n^2)$ & $2\delta$ & $2\Delta+\delta$ \\
Multi lane, amplified & $O(n^2 \hat m + \lambda n^3)$ & $O(n^2)$\,\textsuperscript{\textdagger} & $2\delta$ & $2\Delta+\delta$ \\
\bottomrule
\end{tabular}\\
\textdagger\,$O(n^3)$ under faults, by rule~4 of Section~\ref{sec:amplification}.
\end{table*}
\paragraph{Communication.}
Per view, \textsc{Propose} is $O(n^2)$ messages of $O(c)$ payload; votes
are $O(n^2)$ messages, of $O(\lambda)$ payload in the generic and
single-lane settings; and justification forwarding dominates, an
M-notarization or L-notarization being $O(n)$ (view, sender, identifier,
signature) tuples of $O(n\lambda)$ total, forwarded at most once per kind
per process per view, for $O(\lambda n^3)$. The generic total is
$O(n^2c+\lambda n^3)$ bits and $O(n^2)$ messages. The single-lane gadget is
the case $c=O(\lambda)$.

In the multi-lane setting $c=O(\hat m)$, and the choice is what
notarizations carry: full tipcuts inflate forwarding to $O(n^3\hat m)$,
quartic for $\hat m=\Theta(n)$, while the amplification of Section~\ref{sec:amplification}
keeps identifiers everywhere. Its traffic decomposes by rule. The
rebroadcasts of rules~1 and~2 cost $O((c_b{+}1)\,n^2\hat m)$, within the
\textsc{Propose} budget, and only rule~1 fires in the good case. The
piggybacked fragments of rule~3 add $O(n(\hat m/t+\lambda))=
O(\hat m+n\lambda)$ to each vote for $t=\Theta(n)$, for
$O(n^2\hat m+\lambda n^3)$ over all votes. Rule~4's timeout batch is one message per process per view, $O(n^2)$
messages of $O(\hat m+n\lambda)$ bits, sent only when a timeout fires
inside the view; its endorsement clause fires at most once per identifier
per process, over $O(n)$ view identifiers, for $O(n^3)$ messages of
$O(\hat m/t+\lambda)$ bits under faults and none in the good case. Total: $O(n^2\hat m+\lambda n^3)$ bits; $O(n^2)$
messages in the good case, $O(n^3)$ under faults.

\paragraph{Latency.} After GST a timely leader finalizes in two message delays, \textsc{Propose} then \textsc{Vote}, for $2\delta$; the votes that finalize also justify the next view, so views pipeline with no view-change round. Under a faulty or slow leader, processes vote at the $2\Delta$ timeout and the heaviest common prefix of those votes finalizes one delivery later, for $2\Delta + \delta$: the proposals voted on were delivered before the timeout by construction, and their amplification evidence, the backup rebroadcasts of rule~2 and the fragments riding the votes themselves (rule~3), arrives with the votes:
after GST every honestly proposed value is delivered to every honest
process before its timeout, so votes for honestly proposed values, own or
backup, tally on arrival. In the single-lane gadget, missing ancestry resolves through the chain's own request/response sync, votes buffering until their tips validate and tips cited only by Byzantine processes being excluded in favor of the $4f+1$ honest resolvable votes; a required pull adds its round trip to the tabulated latency, arises only under selective Byzantine delivery, and costs the adversary neither safety nor liveness.

In the multi-lane setting the residual corner is a straggler or catch-up
vote on a Byzantine-crafted tip that no honest process ever received as a
proposal, or a jump vote whose L-notarization endorses such a value: rules~1
and~2 never fire for it, so it amplifies at the remaining processes only
through the endorsement relay of rule~4, one delivery after the endorsing
vote or forwarded notarization arrives
(Corollary~\ref{cor:amplified}). Fallback votes never lag, even on a backup
proposal delivered selectively by a Byzantine sender: the holders' rule~2
rebroadcasts travel with their votes, and every receiver of a full copy is
amplified by its own rebroadcast.
\section{Proofs}
\label{sec:proofs}

Throughout, votes, notarizations, and finalizations refer to the protocol of Algorithm~\ref{alg:hermes-psync} over an arbitrary value space satisfying the axioms of Section~\ref{sec:model}; by the reduction of Section~\ref{subsubsec:prefix-multi}, every statement applies verbatim to the single-lane instantiation over chain tips and to the multi-lane instantiation over resolved derived sequences.

A view-$v$ vote is \emph{counted} if its endorsed value is a valid view-$v$ value: one that extends $\Call{ParentOf}{J}$ for some valid view-$(v-1)$ justification $J$. Honest votes are counted by construction: leader-path and fallback votes endorse valid proposals, and carry-over votes endorse either the $\preceq$-maximum of a counted M-notarization, itself a counted vote's value, or the heaviest common prefix of a counted view-$v$ L-notarization, which extends the selected continuation of the justification behind any proposal voted within it. A vote whose validity cannot yet be verified is buffered, exactly as proposals are; notarizations and prefix extraction range over counted votes only.\footnote{Counting is not optional: since $\mathsf{HCP}$ requires full support, a single Byzantine vote endorsing the genesis value inside a $4f{+}1$ set pins its extraction to genesis, and injecting $f$ such votes per view voids all progress forever. The selected continuation is immune, its $2f{+}1$ threshold tolerating $f$ junk votes, which is why parent selection never needed the restriction; extraction does.}

We say a value $x$ is \emph{finalizable in view $v$} if some set $V$ of $4f+1$ counted view-$v$ votes from distinct processes satisfies $x \preceq \mathsf{HCP}(\mathcal{S}_V)$, where $\mathcal{S}_V$ is the multiset of values $V$ endorses; such a set is a valid L-notarization and we call it a \emph{witness} for $x$. A value is \emph{finalized} when some honest process appends it (line~\ref{line:append}); every finalized value is finalizable in the view of its L-notarization. Recall from Section~\ref{sec:model} that for any multiset $\mathcal{S}$ of at most $4f+1$ values, the set $\{x : \mathrm{supp}_{\mathcal{S}}(x) \ge 2f+1\}$ is totally ordered by $\preceq$ with maximum $\mathrm{SC}(\mathcal{S})$, and that every honest process casts at most one vote per view, every vote send being guarded by $voted_v$.

\subsection{Safety}
\label{subsec:safety-prefix}
\begin{lemma}[Vote support of finalizable prefixes]
\label{lem:vote-support}
Let $x$ be finalizable in view $v$ and let $W$ be any set of $m$ counted view-$v$ votes from distinct processes. Then $W$ contains at least $m - 2f$ votes cast by honest processes whose endorsed values extend $x$. In particular, an L-notarization ($m = 4f+1$) contains at least $2f+1$ such votes, and an M-notarization ($m = 2f+1$) contains at least one.
\end{lemma}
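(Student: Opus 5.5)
The argument is a quorum-intersection count. Since $x$ is finalizable in view $v$, fix a witness: a set $V$ of $4f+1$ counted view-$v$ votes from distinct processes with $x \preceq \mathsf{HCP}(\mathcal{S}_V)$. By the definition of $\mathsf{HCP}$ as a value of full support in $\mathcal{S}_V$, every value endorsed by a vote of $V$ extends $\mathsf{HCP}(\mathcal{S}_V)$. Because $\preceq$ is a partial order, hence transitive, every vote of $V$ endorses a value extending $x$. So at least $4f+1$ distinct processes cast a view-$v$ vote extending $x$.

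Next, restrict to honest processes and use single-voting. Among the $4f+1$ senders of $V$, at most $f$ are Byzantine, so at least $3f+1$ honest processes cast a view-$v$ vote extending $x$. Every vote send is guarded by $voted_v$, so each honest process casts at most one view-$v$ vote. Hence any view-$v$ vote cast by one of these $3f+1$ honest processes is exactly its vote in $V$, and it extends $x$. Call this set of processes $H_x$; it has $|H_x| \ge 3f+1$, and every honest process outside $H_x$ is one of at most $n - f - (3f+1) = f$ honest processes whose vote may fail to extend $x$.

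Now count inside an arbitrary set $W$ of $m$ counted view-$v$ votes from distinct processes. Its senders are distinct, so at most $f$ of them are Byzantine. At most $f$ of them are honest processes outside $H_x$, since there are at most $f$ such processes in total. The remaining at least $m-2f$ votes come from processes in $H_x$. By the single-vote argument above, each of these is an honest vote endorsing a value extending $x$. Substituting $m=4f+1$ gives $2f+1$, and $m=2f+1$ gives $1$.

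The only delicate point is the identification step. We must make sure that the vote of an $H_x$ process appearing in $W$ is the same message as its vote in $V$, and not a second view-$v$ vote from a carry-over path such as a straggler or jump vote. The $voted_v$ flags rule this out: they are never reset and guard all four send sites. Signatures rule out forged honest votes. Countedness plays no role beyond defining which sets are witnesses, so nothing more is needed.
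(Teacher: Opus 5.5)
Your proof is correct and is essentially the paper's argument. The paper intersects the witness $V$ with $W$ directly ($|V\cap W|\ge m-f$, minus at most $f$ Byzantine senders), while you count the complement of $V$'s honest senders inside $W$; both rest on the same identification step, that the $voted_v$ flags force an honest process's vote in $W$ to be its vote in $V$. One small fix: bound the honest processes outside $H_x$ by the $n-(4f+1)=f$ processes outside $V$, because your count $(n-f)-(3f+1)$ tacitly assumes exactly $f$ Byzantine processes.
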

\begin{proof}
Let $V$ witness the finalizability of $x$; every value endorsed in $V$ extends $\mathsf{HCP}(\mathcal{S}_V) \succeq x$. The sets $V$ and $W$ name $|V| + |W| = 4f+1+m$ processes among $n = 5f+1$, so at least $m - f$ distinct processes appear in both, of whom at least $m - 2f$ are honest. An honest process casts at most one view-$v$ vote, so for each such process its vote in $W$ \emph{is} its vote in $V$, and its value extends $x$.
\end{proof}

\begin{lemma}[Justified proposals extend finalizable prefixes]
\label{lem:parent-safety}
Let $x$ be finalizable in view $v$ and let $J$ be any valid view-$v$ justification. Then $x \preceq \Call{ParentOf}{J}$. Consequently every valid proposal of view $v+1$ extends $x$.
\end{lemma}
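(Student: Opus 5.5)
The plan is a case split on the kind of justification $J$, using Lemma~\ref{lem:vote-support} in each case to place an honest vote extending $x$ inside $J$, and then pushing that vote up to $\Call{ParentOf}{J}$ through the prefix structure.

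\emph{Case 1: $J$ is an M-notarization.} Its $2f+1$ counted view-$v$ votes are pairwise comparable, and $\Call{ParentOf}{J}$ is their $\preceq$-greatest endorsed value $b^\ast$, which extends every vote in $J$. Lemma~\ref{lem:vote-support} with $m=2f+1$ gives at least one honest vote in $J$ whose value $B$ satisfies $x \preceq B$. Since $B \preceq b^\ast$, transitivity gives $x \preceq b^\ast$.

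\emph{Case 2: $J$ is an L-notarization.} Let $\mathcal{S}_J$ be its multiset of $4f+1$ endorsed values. Lemma~\ref{lem:vote-support} with $m=4f+1$ gives at least $2f+1$ votes in $J$ whose values extend $x$. Hence $\mathrm{supp}_{\mathcal{S}_J}(x)\ge 2f+1$, so $x$ lies in the set $\{y : \mathrm{supp}_{\mathcal{S}_J}(y)\ge 2f+1\}$. By Section~\ref{sec:model}, this set is totally ordered by $\preceq$ and has maximum $\mathrm{SC}(\mathcal{S}_J)$. Therefore $x \preceq \mathrm{SC}(\mathcal{S}_J) = \Call{ParentOf}{J}$.

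One subtlety needs checking here. The total-order claim in Section~\ref{sec:model} says any two values of support at least $2f+1$ share a common extension in $\mathcal{S}$. That alone does not make them comparable: we need both to be prefixes of a single value $B$, and the model axiom that the prefixes of any single value are pairwise comparable supplies exactly this. So being the $\preceq$-maximum indeed means extending every element of the set. Finiteness of prefixes guarantees the maximum exists.

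\emph{Consequence.} A valid proposal $B'$ of view $v+1$ satisfies, by the validity rule, $\Call{ParentOf}{J}\preceq B'$ for some valid view-$v$ justification $J$. Transitivity then gives $x \preceq B'$. In the multi-lane instantiation, validity condition~5 supplies $\sigma(P)$ as a prefix of $\sigma(T)$, and $\preceq$ on tipcuts is defined through these derived sequences, so the same transitivity argument applies.

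The main obstacle is mild. It is making sure Lemma~\ref{lem:vote-support} applies to $J$, which requires $J$'s votes to be counted view-$v$ votes from distinct processes. Validity of a justification is defined over counted votes, so this holds by definition. With that in place, the rest is transitivity of the partial order $\preceq$.
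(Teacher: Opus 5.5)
Your proof is correct and follows the paper's argument essentially step for step: the same case split on the justification type, Lemma~\ref{lem:vote-support} with $m=2f+1$ and $m=4f+1$, maximality of $b^\ast$ or of $\mathrm{SC}(\mathcal{S}_J)$ in the totally ordered set of values with support at least $2f+1$, and transitivity through the validity rule for the consequence. Your side remark, that comparability of high-support values needs the single-value prefix axiom and not just a shared extension, is the step the model section's argument uses implicitly.
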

\begin{proof}
The second claim follows from the first by the validity rule of Section~\ref{subsubsec:validity-single} and transitivity of $\preceq$: a valid view-$(v+1)$ proposal extends $\Call{ParentOf}{J}$ for some valid view-$v$ justification $J$. For the first claim we distinguish the two kinds of justification.

\emph{Case L-notarization.} $J$ consists of $4f+1$ counted view-$v$ votes with endorsed multiset $\mathcal{S}_J$, and $\Call{ParentOf}{J} = \mathrm{SC}(\mathcal{S}_J)$. By Lemma~\ref{lem:vote-support}, $\mathrm{supp}_{\mathcal{S}_J}(x) \ge 2f+1$, so $x$ lies in the $\preceq$-totally-ordered set $\{y : \mathrm{supp}_{\mathcal{S}_J}(y) \ge 2f+1\}$, whose $\preceq$-maximum is $\mathrm{SC}(\mathcal{S}_J)$.

\emph{Case M-notarization.} $J$ consists of $2f+1$ pairwise comparable counted view-$v$ votes, and $\Call{ParentOf}{J} = b^\ast$ is their $\preceq$-maximum. By Lemma~\ref{lem:vote-support}, some vote in $J$ endorses a value $B$ with $x \preceq B$; and $B \preceq b^\ast$ by maximality. Hence $x \preceq b^\ast$ by transitivity.
\end{proof}

\begin{lemma}[Finalizable prefixes of one view are comparable]
\label{lem:same-view}
Any two values finalizable in the same view are comparable.
\end{lemma}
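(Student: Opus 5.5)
Let $x$ and $y$ both be finalizable in view $v$, with witnesses $V_x$ and $V_y$. Each witness is a set of $4f+1$ counted view-$v$ votes from distinct processes. The plan is to find a single honest vote whose endorsed value extends both $x$ and $y$. Since the axioms of Section~\ref{sec:model} require the prefixes of any single value to be pairwise comparable, comparability of $x$ and $y$ then follows immediately.

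First I will apply Lemma~\ref{lem:vote-support} with $x$ as the finalizable value and $W = V_y$, taking $m = 4f+1$. This shows that $V_y$ contains at least $2f+1$ honest votes whose values extend $x$. Every value endorsed in $V_y$ extends $\mathsf{HCP}(\mathcal{S}_{V_y})$, and $y \preceq \mathsf{HCP}(\mathcal{S}_{V_y})$, so each of those votes also extends $y$ by transitivity. Since $2f+1 \ge 1$, at least one such vote exists, with some value $B$ satisfying $x \preceq B$ and $y \preceq B$.

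The final step is to note that $x$ and $y$ both lie in the set of prefixes of $B$, which the axioms declare pairwise comparable. Hence $x \preceq y$ or $y \preceq x$.

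I do not expect a real obstacle here. The only point needing care is that Lemma~\ref{lem:vote-support} applies to arbitrary counted vote sets $W$, and a witness is exactly such a set. The overlap argument inside that lemma already relies on honest processes casting at most one vote per view, so the honest vote in $V_y$ is the same vote that appears in $V_x$. Alternatively, one could count directly: the two witnesses share at least $3f+1$ processes, of which at least $2f+1$ are honest, and each honest process's single vote lies in both witnesses.
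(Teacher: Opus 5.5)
Your proof is correct and is essentially the paper's argument. You apply Lemma~\ref{lem:vote-support} to the witness $V_y$ to get a vote endorsing some $B \succeq x$, note that $y \preceq \mathsf{HCP}(\mathcal{S}_{V_y}) \preceq B$, and then use the model axiom that the prefixes of a single value are pairwise comparable. Your alternative direct overlap count (at least $3f+1$ shared processes, at least $2f+1$ of them honest) is also valid; it simply inlines the proof of that lemma.
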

\begin{proof}
Let $x_1, x_2$ be finalizable in view $v$ and let $V_2$ witness $x_2$. By Lemma~\ref{lem:vote-support}, some vote in $V_2$ endorses a value $B$ extending $x_1$; and $x_2 \preceq \mathsf{HCP}(\mathcal{S}_{V_2}) \preceq B$ since every value of $V_2$ extends the extraction. Both $x_1$ and $x_2$ are prefixes of the single value $B$, hence comparable by the model axioms.
\end{proof}

\begin{lemma}[Domination persists across views]
\label{lem:chain}
If $x$ is finalizable in view $v$, then for every view $w > v$, every counted view-$w$ vote endorses a value extending $x$. Consequently $x \preceq \mathsf{HCP}(\mathcal{S}_W)$ for every valid L-notarization $W$ of every view $w > v$, and $x \preceq \Call{ParentOf}{J}$ for every valid justification $J$ of every view $u \ge v$.
\end{lemma}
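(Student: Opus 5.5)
We will prove the first claim by strong induction on $w > v$, and then derive the two consequences from it.

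\emph{Base case $w = v+1$.} A counted view-$(v+1)$ vote endorses, by definition, a value extending $\Call{ParentOf}{J}$ for some valid view-$v$ justification $J$. Lemma~\ref{lem:parent-safety} gives $x \preceq \Call{ParentOf}{J}$, and transitivity of $\preceq$ finishes this case.

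\emph{Inductive step, $w > v+1$.} Assume every counted vote of every view $w'$ with $v < w' < w$ endorses a value extending $x$. A counted view-$w$ vote endorses a value extending $\Call{ParentOf}{J}$ for some valid view-$(w-1)$ justification $J$. It suffices to show $x \preceq \Call{ParentOf}{J}$. Here we cannot reuse Lemma~\ref{lem:parent-safety}, since $x$ need not be finalizable in view $w-1$. Instead we use the inductive hypothesis directly, split by the kind of $J$.
\begin{itemize}
\item \textbf{$J$ is an M-notarization.} Every vote of $J$ extends $x$. The tip $b^\ast$ is one of these votes, or at least extends all of them, so $x \preceq b^\ast$.
\item \textbf{$J$ is an L-notarization.} All $4f+1$ values of $\mathcal{S}_J$ extend $x$. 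Hence $\mathrm{supp}_{\mathcal{S}_J}(x) = 4f+1 \ge 2f+1$. So $x$ lies in the totally ordered set $\{y : \mathrm{supp}_{\mathcal{S}_J}(y) \ge 2f+1\}$, whose maximum is $\mathrm{SC}(\mathcal{S}_J)$. Therefore $x \preceq \mathrm{SC}(\mathcal{S}_J)$.
\end{itemize}
Note that justifications consist of counted votes, since notarizations range over counted votes only. This is what lets us apply the hypothesis to their members.

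\emph{Consequences.} For a valid L-notarization $W$ of view $w > v$, every value in $\mathcal{S}_W$ extends $x$. So $\mathrm{supp}_{\mathcal{S}_W}(x) = |\mathcal{S}_W|$, and $x \preceq \mathsf{HCP}(\mathcal{S}_W)$ by maximality. For the claim about $\Call{ParentOf}{J}$ with $J$ a justification of view $u$:
\begin{itemize}
\item if $u = v$, this is exactly Lemma~\ref{lem:parent-safety};
\item if $u > v$, it is the case analysis of the inductive step applied to view $u$, using the first claim for view $u$.
\end{itemize}

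\emph{Main obstacle.} The delicate point is the L-notarization case: we must make sure $\mathrm{SC}$ is the $\preceq$-maximum of a totally ordered set that contains $x$, and not merely some element above some supported value. Total order of the $2f{+}1$-supported set (Section~\ref{sec:model}) resolves this. A second point to state explicitly is that "valid justification" means one assembled from counted votes, so the hypothesis covers all of $J$'s members, including straggler and jump votes.
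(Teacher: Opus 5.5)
Your proof is correct and matches the paper's argument step for step. You induct on $w$ with Lemma~\ref{lem:parent-safety} as the base case, then apply the hypothesis to the counted votes of a view-$(w-1)$ justification: full support places $x$ below $\mathrm{SC}$ for an L-notarization, and the tip's maximality handles an M-notarization. The one difference is cosmetic: you put $x$ directly into the $(2f{+}1)$-supported chain, whereas the paper goes through $x \preceq \mathsf{HCP} \preceq \mathrm{SC}$.
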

\begin{proof}
By induction on $w$. A counted view-$w$ vote endorses a valid view-$w$ value, which extends $\Call{ParentOf}{J}$ for some valid view-$(w-1)$ justification $J$; so it suffices that $x \preceq \Call{ParentOf}{J}$ for every valid view-$(w-1)$ justification. For $w-1 = v$ this is Lemma~\ref{lem:parent-safety}. For $w-1 > v$, the induction hypothesis gives that all $4f+1$ counted votes of any view-$(w-1)$ L-notarization endorse values extending $x$, so $x$ has full support in its endorsed multiset and lies below its $\mathsf{HCP}$, hence below its $\mathrm{SC}$; and the $\preceq$-maximum of any view-$(w-1)$ M-notarization is itself a counted vote's value, extending $x$. Either parent therefore extends $x$, completing the induction; the two consequences restate its last step.
\end{proof}

\begin{theorem}[Safety]
\label{thm:safety}
No two honest processes ever finalize incomparable values.
\end{theorem}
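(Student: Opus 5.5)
The plan is to reduce the statement to a comparison between two finalizable values and then split on whether they come from the same view or from different views, invoking Lemma~\ref{lem:same-view} and Lemma~\ref{lem:chain} respectively.

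\textbf{Reduction to finalizable values.} Suppose honest $p_a$ finalizes $y_1$ and honest $p_b$ finalizes $y_2$. By line~\ref{line:append}, each appended value is $\mathsf{HCP}(\mathcal{S}_L)$ for some L-notarization $L$ that the process verified. Extraction ranges over counted votes only, so $L$ is a set of $4f+1$ counted votes from distinct processes. Hence $y_1$ is finalizable in some view $v_1$, witnessed by its own L-notarization, and likewise $y_2$ in view $v_2$. Without loss of generality take $v_1 \le v_2$.

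\textbf{Case $v_1 = v_2$.} Lemma~\ref{lem:same-view} applies directly and gives $y_1$ and $y_2$ comparable.

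\textbf{Case $v_1 < v_2$.} Lemma~\ref{lem:chain} gives $y_1 \preceq \mathsf{HCP}(\mathcal{S}_W)$ for every valid L-notarization $W$ of view $v_2$. In particular this holds for the witness $L_2$ with $y_2 = \mathsf{HCP}(\mathcal{S}_{L_2})$, so $y_1 \preceq y_2$ and the two values are comparable.

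\textbf{Finalized logs.} The theorem should also cover the sequences of appended values, i.e.\ the finalized heads, not only single appended values. The head-check at line~\ref{line:head-check} makes each local finalized head monotone: every new head extends the previous one. Each head is itself an appended value, or genesis, which is comparable with everything. Applying the pairwise argument above to the two current heads then shows the two histories are nested.

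\textbf{Main obstacle.} The delicate step is justifying that the honest process's L-notarization consists of \emph{counted} votes, so that it is a genuine witness in the sense required by Lemmas~\ref{lem:same-view} and~\ref{lem:chain}. This needs the buffering/validation rule: a vote is tallied only once its endorsed value is verified to extend $\Call{ParentOf}{J}$ for a valid previous-view justification. In the multi-lane case this also relies on the reduction of Section~\ref{subsubsec:prefix-multi}, which lets $\preceq$ and $\mathsf{HCP}$ be evaluated on resolved derived sequences. I would state this as a one-line observation before the case split.
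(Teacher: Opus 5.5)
Your proposal is correct and follows essentially the same route as the paper. It reduces to finalizable values, using the counted-vote convention that makes every finalized value finalizable, and then splits into the same-view case via Lemma~\ref{lem:same-view} and the later-view case via Lemma~\ref{lem:chain}; taking the finalizing L-notarization itself as the witness gives you the sharper conclusion $y_1 \preceq y_2$, and your paragraph on finalized logs is a harmless extra that the statement does not require.
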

\begin{proof}
We show that any two finalizable values, in any views, are comparable; finalized values are finalizable, giving the claim. Let $x$ be finalizable in view $v$ and $y$ in view $w \ge v$ with witness $W$. If $w = v$, this is Lemma~\ref{lem:same-view}. If $w > v$, then $x \preceq \mathsf{HCP}(\mathcal{S}_W)$ by Lemma~\ref{lem:chain} and $y \preceq \mathsf{HCP}(\mathcal{S}_W)$ by definition of witness, so both are prefixes of one value and comparable.
\end{proof}

\subsection{Liveness}
\label{subsec:liveness}
In the amplified multi-lane instantiation, a vote tallies once its endorsed
value is amplified at the tallying process. Lemma~\ref{lem:availability}
bounds when this happens: a value broadcast in full by some honest process
is amplified everywhere no later than the votes endorsing it arrive, and
any endorsed value is amplified everywhere within one delivery of its
vote's delivery.Lemmas~\ref{lem:sync} to~\ref{lem:honest-leader} are stated for the
protocol in which a delivered message is immediately usable. This covers
the generic protocol and the single-lane instantiation up to the pull round
trips of Section~\ref{sec:complexity}; the amplified multi-lane
instantiation is treated in Section~\ref{subsec:liveness-amplified}, whose
Corollary~\ref{cor:amplified} transfers the bounds.

\begin{lemma}[View synchronization]\label{lem:sync} 
If the first honest process reaches a view at least $v$ at time $t$ after GST, every honest process is in a view at least $v$ by $t+\delta$.
\end{lemma}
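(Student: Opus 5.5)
The plan is to trace how the first honest process $p$ came to be in a view at least $v$ at time $t$, and to show that whatever moved it there reaches every other honest process within one delivery. An honest process enters a view $u+1$ only at line~\ref{line:advance-mnot} or line~\ref{line:advance-lnot}, that is, upon holding a verified M-notarization or L-notarization from some view $u \ge v-1$. The one exception is the initial view, which every process starts in simultaneously, so the claim is trivial there. So at time $t$, $p$ holds a justification $J$ for a view $u$ with $u+1 \ge v$.

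Next I would use the forwarding rules at lines~\ref{line:fwd-lnot} and~\ref{line:fwd-mnot}. Upon verifying $J$, $p$ sends to all processes the first verified notarization of that kind for view $u$. This is either $J$ itself or an earlier notarization $J'$ of the same kind and view, which is an equally valid justification for view $u$. Since $t$ is after GST, this message arrives at every honest process $q$ by $t+\delta$.

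On receipt, $q$ holds an M-notarization or L-notarization for view $u$. Hence the guard at line~\ref{line:mnot-guard} or line~\ref{line:lnot-guard} fires, and if $q$'s current view is at most $u$ it jumps to $u+1 \ge v$. If $q$ is already in a view above $u$, it is already in a view at least $v$. Either way, every honest process is in a view at least $v$ by $t+\delta$.

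The main obstacle is making precise that a delivered notarization is immediately actionable at $q$. In the generic setting, counting requires the endorsed values to be valid, which may depend on justifications for view $u-1$ that $q$ might not yet hold. I would handle this in two steps.

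First, a verified notarization at $p$ consists of counted votes, so each endorsed value's validity is witnessed by a view-$(u-1)$ justification. Second, the jump guards of Algorithm~\ref{alg:hermes-psync} only require holding a verified notarization, and verification of the signature set is local. If validity checks would otherwise buffer it, I would argue inductively that $p$ forwarded the relevant earlier justifications when it processed them, and those too have arrived by $t+\delta$. This relies on the lemma's standing assumption that delivered messages are immediately usable, with the amplified case deferred to Corollary~\ref{cor:amplified}.

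A minor point is the case of a notarization verified before GST but forwarded then. For this I would note that re-delivery after GST is guaranteed because the message is sent by an honest process, and channels deliver by $\max(\text{GST},\text{send})+\Delta$. I would also restate the bound using the actual delay $\delta$, as the lemma does.
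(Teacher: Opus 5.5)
Your argument is the paper's own: the entrant jumped upon verifying a justification for some view $u \ge v-1$. The forwarding rules deliver that justification to every honest process within $\delta$, and the jump guard there lifts each process to a view at least $u+1 \ge v$. The paper likewise leaves the validity chain to the standing assumption that delivered messages are immediately usable.

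One small fix concerns the earlier-$J'$ / pre-GST case you worry about. It cannot occur: verifying any view-$u$ justification before $t$ would already have put $p$ in a view at least $v$, contradicting the minimality of $t$. So the forward is sent at $t$ itself. Your $\max(\mathrm{GST},\text{send})+\Delta$ bound would not have yielded $t+\delta$ anyway.
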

\begin{proof}
The entrant entered upon verifying a justification for some view $v' \ge v{-}1$ (lines~\ref{line:mnot-jump-guard} and~\ref{line:lnot-jump-guard}), which every honest process verifies by $t + \delta$; the jump guard then places every process in a view at least $v'{+}1 \ge v$.
\end{proof}
\begin{lemma}[View completion]\label{lem:completion}
If the first honest process reaches a view at least $v$ at time $t$ after GST, then there is a view $w \ge v$ such that, by $t+2\Delta+2\delta$, every honest process holds a valid view-$w$ L-notarization and its finalized head extends the heaviest common prefix of one.
\end{lemma}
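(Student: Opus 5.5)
Let $t$ be the time at which the first honest process reaches a view at least $v$. By Lemma~\ref{lem:sync}, every honest process is in a view at least $v$ by $t+\delta$. The proof splits on whether some honest process leaves its view early, that is, receives a justification for a view $\ge v$ before its own timeout fires. The goal is a single view $w \ge v$ for which a valid L-notarization is held by every honest process by $t+2\Delta+2\delta$. Holding such an L-notarization makes each process execute line~\ref{line:append}. Its finalized head then extends that L-notarization's heaviest common prefix, either directly or because it already extends it.

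\emph{Main case: honest processes stay in a common view.} Consider the largest view $w\ge v$ entered by some honest process by $t+\delta$. Each honest process enters its current view by $t+\delta$, sends its \textsc{Propose} at entry, and starts its timer then. After GST all honest proposals of view $w$ are therefore delivered everywhere by $t+2\delta \le t+\delta+2\Delta$, since $\delta\le\Delta$. Any honest process still in view $w$ at its timeout then holds at least $4f+1$ valid proposals, because the honest proposals are valid by Lemma~\ref{lem:parent-safety}'s construction. It therefore casts a vote by $t+\delta+2\Delta$ at the latest: a leader vote, a fallback vote, or a jump or catch-up vote. All $4f+1$ honest view-$w$ votes are thus sent by $t+\delta+2\Delta$ and arrive by $t+2\Delta+2\delta$. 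These votes are counted, as shown at the start of Section~\ref{sec:proofs}, and from distinct processes they form a valid L-notarization at every honest process. By the definition of finalizability and Lemma~\ref{lem:chain}, its $\mathsf{HCP}$ is comparable with, and indeed extends, any earlier finalized head. The guard $H\preceq B$ of line~\ref{line:head-check} therefore passes, or the head already extends $B$.

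\emph{Jump case.} Suppose instead some honest process sees a justification for a view $w'\ge w$ before voting. If it is an L-notarization, it is forwarded by line~\ref{line:fwd-lnot} and reaches everyone within $\delta$, which settles the claim with $w=w'$. If it is an M-notarization, lines~\ref{line:mnot-guard}--\ref{line:send-votemnot} make every honest process that has not voted in $w'$ vote for its tip upon receipt. The M-notarization is forwarded by line~\ref{line:fwd-mnot}, so these catch-up votes go out within $\delta$ of its first honest receipt. Every honest process that already voted in $w'$ did so before that point. Hence all $4f+1$ honest view-$w'$ votes exist and are delivered shortly after, again yielding a common L-notarization.

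\emph{Expected obstacle.} The hard step is bounding the timing in the jump case. The first M-notarization could arrive just before $t+\delta+2\Delta$, and adding a forwarding delay plus a vote delay seems to overshoot to $t+2\Delta+3\delta$. I would resolve this by choosing $w$ to be the \emph{maximal} view in which any honest process votes by $t+\delta+2\Delta$. Every honest process votes in some view $\ge v$ by then, either at its timeout or by jumping. An M-notarization for view $w$ received by time $s$ consists of votes sent by $s$, so its catch-up votes are sent by $s+\delta$. To stay within the bound, I would argue that jump and catch-up votes in $w$ by the remaining honest processes are triggered by messages that were already in flight by $t+\delta+2\Delta$. I expect this charging argument, showing that each honest view-$w$ vote is sent by $t+2\Delta+\delta$, to be the delicate part.
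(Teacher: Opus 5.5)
Your proposal has a genuine gap, and it is the step you flag yourself. Your setup cannot close it. In the main case you take $w$ to be the largest view entered by some honest process by $t+\delta$, and then reason as if every honest process were in $w$ by $t+\delta$. Lemma~\ref{lem:sync} only places every honest process in a view \emph{at least $v$} by $t+\delta$. Processes behind $w$ may enter it only at $t+2\delta$, so their view-$w$ timers expire at $t+2\delta+2\Delta$ and their votes arrive at $t+2\Delta+3\delta$.

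A process that leaves $w$ upon a justification for some $w'>w$ casts its guarded vote in $w'$, not in $w$. So your claim that it casts a vote by $t+\delta+2\Delta$ does not give a view-$w$ vote. Your repair, taking $w$ \emph{maximal}, points the wrong way. Suppose an honest $p$ first verifies, shortly before $t+\delta+2\Delta$, an M-notarization for a high view whose $f$ Byzantine votes were sent only to $p$. The other honest processes learn it only from $p$'s forward, and their catch-up votes arrive around $t+2\Delta+3\delta$. Nothing was in flight earlier to charge them to. Your L-notarization subcase has the same missing time bound: it helps only if that L-notarization is verified by $t+2\Delta+\delta$ and forwarded together with the justification chain that lets receivers count its votes.

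The missing idea is to anchor at view $v$ itself rather than at a maximal view. A vote is counted only if it extends \textsc{ParentOf}$(J)$ for a valid justification $J$ of the preceding view. So verifying a view-$w'$ justification requires, inductively, holding valid justifications for every view in $[v,w'-1]$. Hence any honest process that ever moves past $v$ holds a valid view-$v$ justification at that moment.

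Now split on time $t+2\Delta+\delta$.
\begin{itemize}
\item If some honest process has verified an L-notarization for a view $\ge v$ by then, forwarding it with its chain gives it to every honest process by $t+2\Delta+2\delta$.
\item Otherwise every view-$v$ justification held by then is an M-notarization. The guard at line~\ref{line:mnot-guard} fires for M-notarizations of any view, including past ones. It therefore makes each honest process cast its view-$v$ vote no later than the moment it leaves $v$.
\begin{itemize}
\item If some honest process skips $v$ entirely, it holds such an M-notarization by $t+\delta$, and its forward triggers every honest vote by $t+2\delta$.
\item Otherwise all honest processes enter $v$ by $t+\delta$, all honest proposals arrive before any timeout, and those still in $v$ vote by $t+\delta+2\Delta$.
\end{itemize}
\end{itemize}
Either way, all $4f+1$ honest view-$v$ votes are sent by $t+2\Delta+\delta$.

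For the finalized-head claim you need only comparability (Theorem~\ref{thm:safety}). Your assertion that the new $\mathsf{HCP}$ extends every earlier head is false when that head came from a later view, although your fallback clause covers that case.
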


\begin{proof}
By Lemma~\ref{lem:sync}, every honest process is in a view at least $v$ by $t+\delta$. Views only increase, so an honest process whose view ever exceeds $v$ without equaling $v$ crossed $v$ by $t+\delta$, upon verifying a justification for some view $w' \ge v$. Verifying the votes of a justification for view $w'$ requires holding a valid justification for view $w'{-}1$, hence, inductively, a valid justification for every view in $[v, w'{-}1]$, in particular for view $v$ itself.

Case 1: some honest process verifies a valid L-notarization for some view $w \ge v$ by $t+2\Delta+\delta$. It forwards it upon verification, and it forwarded each element of the justification chain validating its votes upon first verifying that element; every honest process therefore verifies the L-notarization by $t+2\Delta+2\delta$ and applies the finalization rule to it, after which its head extends the extracted prefix whether or not the append fires. Set $w$ to that view.

Case 2: otherwise. We claim every honest process casts a counted view-$v$ vote by $t+2\Delta+\delta$.

Suppose first some honest process $q$ crosses $v$ without entering it. By the opening paragraph $q$ verifies a valid view-$v$ justification by $t+\delta$, which in Case 2 is an M-notarization. Upon its first such verification, the straggler guard (lines~\ref{line:mnot-guard} to~\ref{line:send-votemnot}) makes $q$ vote in view $v$ for its tip if $q$ has not voted in $v$, and $q$ forwards it with its validating chain. Every honest process verifies that M-notarization by $t+2\delta$ and, if it has not voted in view $v$, votes for its tip then: all honest view-$v$ votes are cast by $t+2\delta$.

Otherwise every honest process enters view $v$, by $t+\delta$, and proposes at entry. All $4f{+}1$ honest proposals, and the justifications validating them, are delivered by $t+2\delta$, before any honest timeout, since each timeout fires $2\Delta \ge 2\delta$ after an entry no earlier than $t$. An honest process still in view $v$ at its timeout, which fires by $t+2\Delta+\delta$, holds the leader's valid proposal or at least $n{-}f$ valid proposals, and votes. An honest process that left view $v$ before its timeout verified a justification for a view at least $v$ by then, hence, by the opening paragraph and Case 2, a view-$v$ M-notarization no later, and voted in view $v$ at that verification via the straggler guard.

In both subcases every honest vote is counted: leader-path and fallback votes endorse valid proposals, and straggler votes endorse the tip of a valid M-notarization. The $4f{+}1$ votes, validated at receivers by the justification chains forwarded no later than the votes themselves, are delivered by $t+2\Delta+2\delta$; they form a valid view-$v$ L-notarization at every honest process, which imposes no agreement among its votes, and every honest process applies the finalization rule to it. Set $w=v$.
\end{proof}
\begin{lemma}[Liveness under an honest leader]\label{lem:honestleader}\label{lem:honest-leader} 
If the leader $p_{L_v}$ of view $v$ is honest and the first honest process reaches a view at least $v$ at time $t$ after GST, then every honest process's finalized head extends the leader's proposal $B_{L_v}$ by $t+3\delta$.
\end{lemma}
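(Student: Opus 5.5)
The plan is to show two things. First, every honest process casts a view-$v$ vote by $t+2\delta$. Second, every such vote endorses a value extending $B_{L_v}$. The $4f{+}1$ honest votes, all delivered by $t+3\delta$, then form an L-notarization whose heaviest common prefix extends $B_{L_v}$, and the finalization rule (lines~\ref{line:gethead} to~\ref{line:append}) applies it at every honest process.

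\emph{Timing.} By Lemma~\ref{lem:sync}, every honest process, and in particular the leader, is in a view at least $v$ by $t+\delta$. I first argue that the leader actually enters view $v$ rather than crossing it, using the invariant below: a view-$v$ justification cannot form before honest view-$v$ votes exist, and those cannot exist before $B_{L_v}$ is sent. The leader proposes at entry, some time in $[t,t+\delta]$. Its justification was forwarded on verification, so $B_{L_v}$ is delivered and validated at every honest process by $t+2\delta$.

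No honest timeout fires before $t+2\Delta\ge t+2\delta$, since every honest entry is no earlier than $t$. Hence every honest process still in view $v$ that has not yet voted casts a leader-path vote on delivery, by $t+2\delta$. The equality case $\Delta=\delta$ needs a remark: delivery and timeout may coincide, so I would either order the leader guard first, as in Algorithm~\ref{alg:hermes-psync}, or note that the fallback rule then also selects the leader's proposal. An honest process that left view $v$ did so on a view-$v$ justification, and it votes in view $v$ at that moment through the straggler or jump guard.

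\emph{Invariant: every honest view-$v$ vote extends $B_{L_v}$.} I prove this by induction on the order in which honest view-$v$ votes are cast. Neither an M-notarization nor an L-notarization can form without honest votes: an M-notarization needs $f{+}1$ of them and an L-notarization needs $3f{+}1$. Fallback votes are excluded before the timeout. So the first honest view-$v$ vote is a leader-path vote, and the leader is honest and proposes once, so it endorses exactly $B_{L_v}$.

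For the inductive step, take any view-$v$ M-notarization. It contains an honest vote, which by hypothesis extends $B_{L_v}$, and its tip is the $\preceq$-maximum of pairwise comparable votes. Hence the tip extends that honest vote and therefore $B_{L_v}$, so straggler votes extend $B_{L_v}$. Now take any view-$v$ L-notarization $L$. It contains at least $3f{+}1\ge 2f{+}1$ honest votes extending $B_{L_v}$, so $\mathrm{supp}_{\mathcal{S}_L}(B_{L_v})\ge 2f{+}1$ and $B_{L_v}\preceq \mathrm{SC}(\mathcal{S}_L)$; jump votes therefore also extend $B_{L_v}$. This same observation supplies the claim, used above, that the leader cannot be pushed past view $v$ before proposing.

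\emph{Finalization.} The $4f{+}1$ honest votes are counted. Together with their validating justifications they reach every honest process by $t+3\delta$, where they form an L-notarization $V$ with $B_{L_v}\preceq \mathsf{HCP}(\mathcal{S}_V)$. The head check $H\preceq \mathsf{HCP}(\mathcal{S}_V)$ either passes, and the append yields a head extending $B_{L_v}$, or it fails. If it fails, Theorem~\ref{thm:safety} makes $H$ and $\mathsf{HCP}(\mathcal{S}_V)$ comparable, so $H$ strictly extends $\mathsf{HCP}(\mathcal{S}_V)$ and hence already extends $B_{L_v}$.

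I need to make precise that the rule is applied to this honest set, or to any L-notarization with a large enough prefix, rather than only to a first-arriving Byzantine-mixed set. If necessary I will read "new L-notarization" as quantifying over every $4f{+}1$-subset held.

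\emph{Main obstacle.} The delicate step is the induction establishing the invariant, in particular ruling out early straggler and jump votes and the leader crossing view $v$. I will handle it by ordering honest votes in time and using the quorum counts ($f{+}1$ honest votes in any M-notarization, $3f{+}1$ in any L-notarization).
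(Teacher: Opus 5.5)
Your proposal is correct and follows essentially the paper's proof. It uses the same timing argument (every honest process votes in view $v$ by $t+2\delta$, before any honest timeout), the same strong induction over the order of honest view-$v$ votes (leader-path votes, M-notarization tips via an honest member, $\mathrm{SC}$ of L-notarizations via $3f{+}1$ honest members), and the same closing step on the $4f{+}1$ honest votes by $t+3\delta$. Your appeal to safety, more precisely to its proof's claim that any two finalizable values are comparable, just spells out the paper's ``whether or not the append fires.''

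Two steps to state more carefully. First, a process may skip view $v$ entirely, or leave it on a justification for a later view, so ``it left on a view-$v$ justification'' is not accurate as written. Use instead, as the paper does, the chain argument of Lemma~\ref{lem:completion}: the process must first have verified a view-$v$ justification, and it voted in view $v$ then via the straggler or jump guard. Second, in the $\Delta=\delta$ tie, rely only on the guard order. Your alternative does not work, because the ranked fallback rule selects a backup's proposal, not the leader's.
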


\begin{proof}
By Lemma~\ref{lem:sync}, every honest process is in a view at least $v$ by $t+\delta$; the leader enters $v$ and proposes by $t+\delta$, so $B_{L_v}$ is delivered to every honest process by $t+2\delta$. Every honest timeout fires $2\Delta \ge 2\delta$ after an entry no earlier than $t$, so $B_{L_v}$ arrives before any honest view-$v$ timeout and no honest fallback vote fires.

Every honest view-$v$ vote endorses a value extending $B_{L_v}$, by strong induction on the order in which honest view-$v$ votes are cast. A leader-path vote endorses $B_{L_v}$. A straggler or catch-up vote endorses the tip of a valid view-$v$ M-notarization, which contains at least $f{+}1$ honest view-$v$ votes cast earlier, each endorsing an extension of $B_{L_v}$ by the induction hypothesis; the tip, the $\preceq$-maximum of pairwise comparable values, extends one of them, hence extends $B_{L_v}$. A jump vote is cast upon verifying a view-$v$ L-notarization, which contains at least $3f{+}1$ honest votes, all extending $B_{L_v}$ by the induction hypothesis, so $B_{L_v}$ has support at least $2f{+}1$ in its endorsed multiset and the vote's value $\SC \succeq B_{L_v}$; such a vote is moreover counted, since $\SC$ extends the leader's parent.

Every honest process votes in view $v$ by $t+2\delta$: a process in view $v$ at $t+2\delta$ holds $B_{L_v}$ and votes leader-path if it has not voted; a process beyond view $v$ crossed it by $t+\delta$ (Lemma~\ref{lem:sync}), holds a valid view-$v$ justification at the crossing by the chain argument of Lemma~\ref{lem:completion}, and voted in view $v$ upon first verifying it, via the straggler guard for an M-notarization or the jump guard for an L-notarization.

The $4f{+}1$ honest view-$v$ votes, and
their validating justifications are delivered by $t+3\delta$. They constitute a valid view-$v$ L-notarization in which every endorsed value extends $B_{L_v}$, so its heaviest common prefix extends $B_{L_v}$; every honest process applies the finalization rule to it by $t+3\delta$, after which its head extends $B_{L_v}$ whether or not the append fires.
\end{proof}

Lemmas~\ref{lem:completion} and~\ref{lem:honest-leader} hold verbatim for
the amplified multi-lane instantiation whenever every vote of the
finalizing L-notarization endorses a value that some honest process
broadcast in full, by Lemma~\ref{lem:availability}(i); otherwise their
bounds hold with one additional $\delta$, by
Lemma~\ref{lem:availability}(ii).

\subsection{Liveness of the Amplified Instantiation}
\label{subsec:liveness-amplified}

In the amplified multi-lane instantiation, a vote tallies once its endorsed
value is amplified at the tallying process. Two facts govern the timing.
First, resolution is never the bottleneck: the gating rule admits an honest
vote only once its endorsed value is amplified at the voter, which
certifies $t$ honest fragment broadcasts already in flight, so every honest
process can reconstruct the value by the vote's delivery. Second, the tally
requires amplification rather than resolution, because only amplification
transfers: a process that finalizes from an L-notarization forwards it, and
its receivers must themselves certify availability.

\begin{lemma}[Amplification transfer, multi-lane]
\label{lem:transfer}
An identifier amplified at an honest process, and endorsed by a message
delivered to all honest processes, is amplified at every honest process
within two deliveries.
\end{lemma}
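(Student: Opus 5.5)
The argument will follow the definition of \emph{amplified}: a process holds accepted fragments for $\mathrm{id}(T)$ from $t+f$ distinct senders, or has itself broadcast $T$ in full. Let $p$ be an honest process at which $\mathrm{id}(T)$ is amplified. In both cases I will extract a set of at least $t$ honest processes that have \emph{broadcast} evidence of $T$ to everyone: either a full copy or their own sender-indexed fragment $\mathrm{frag}_k(T)$.

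In the first case at most $f$ of the $t+f$ accepted senders are Byzantine. Fragments are accepted only from their own sender $p_k$ with a valid opening, so at least $t$ honest processes sent $\mathrm{frag}_k(T)$ or a full copy. I will check, rule by rule, that every such honest send is a broadcast to all processes. Rules~1 and~2 broadcast full copies. Rule~3 fragments ride on \textsc{Vote} messages, which go to all processors. Rule~4 fragments go either to each other process in the timeout batch, or to all in the endorsement relay. In the second case $p$ broadcast $T$ in full, so every honest process holds $T$ and has resolved it.

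First delivery. After one delivery, every honest process $q$ has accepted at least $t$ honest fragments, or holds $T$ in full. Either way $T$ is resolved at $q$. By assumption, $q$ has also received a message endorsing $\mathrm{id}(T)$ within this horizon, since the endorsing message is delivered to all honest processes. I will take the two-delivery bound to be measured from the later of the amplification at $p$ and the endorsing message's delivery; pinning down this reading is part of the first step.

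Second delivery. Now $q$ holds $T$ resolved and has received a vote or notarization endorsing its identifier. So the endorsement clause of rule~4 applies: if $T$ is not yet amplified at $q$, then $q$ broadcasts $\mathrm{frag}_q(T)$, or has already sent it earlier under rule~3 or the timeout batch. This holds at every honest $q$, so by the end of the second delivery every honest process has accepted fragments from all $4f+1\ge t+f$ honest senders, using $t\le 3f+1$. Hence $\mathrm{id}(T)$ is amplified everywhere. A process that amplified earlier only helps: if it amplified by a full broadcast, that counts as its fragment at everyone else.

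The main obstacle is the \emph{at most once per identifier ever} restrictions. The worry is that some honest $q$ used up its single fragment send on a message that did not reach everyone. To rule this out I will argue that every channel through which an honest process emits $\mathrm{frag}_q(T)$ addresses all processes. Rule~3 piggybacks on broadcast votes, and the timeout batch goes to each other process, so an earlier send is still delivered to all after GST. A second subtlety arises when $q$ was already amplified and therefore skips rule~4. In that case I use the following fact: every honest process that is amplified in a way that counts toward others has broadcast a full copy or a fragment. If instead $q$ is amplified only through others' fragments, then those $t+f$ senders already include at least $t$ honest broadcasters. I will close this gap by counting honest fragment broadcasters directly, rather than per process.
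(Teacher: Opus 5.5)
Your argument follows the paper's proof step for step. At least $t$ honest senders behind the amplification resolve $T$ everywhere within one delivery. The endorsement clause of rule~4 makes each holder relay its own fragment. Then $4f+1\ge t+f$ honest fragments amplify $T$ everywhere one delivery later.

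The gap is the one you flag in your last paragraph, and your repair does not close it. The relay fires only while $T$ is resolved \emph{but not amplified} at the holder. Consider an honest $q$ that cast its view vote before resolving $T$, left the view before its $2\Delta$ timeout, and reached $t+f$ accepted senders before any endorsing message arrived. Such a $q$ is never prompted to send $\mathsf{frag}_q(T)$: its vote preceded resolution, it skips the timeout batch, and the relay is disabled once it is amplified.

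Counting honest broadcasters directly only recovers the $t$ honest senders behind $q$'s own amplification. That is enough to resolve $T$ at other processes, but not to give them $t+f$ senders. Concretely, take $t=3f+1$ and a value $T$ that is neither the leader's nor a backup proposal, so rules~1 and~2 never fire for it. Let $q$ accept fragments from $3f+1$ honest senders and from the $f$ Byzantine processes. The honest fragments can be rule~3 fragments riding on votes for the leader's proposal, which do not endorse $T$. Deliver the vote endorsing $T$ to $q$ only afterwards. Every other honest $r$ from which the Byzantine processes withhold their fragments then sees at most $4f<t+f$ accepted senders. So $r$ is not amplified within two deliveries.

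The paper's proof asserts the same step (``each holder then relays its own fragment''), so it offers no patch. The missing ingredient is a guarantee that \emph{every} honest process broadcasts its own fragment or a full copy once it holds $T$ resolved and has received an endorsement. The simplest way to obtain it is to change the relay condition from ``resolved but not amplified'' to ``resolved, and neither $\mathsf{frag}_i(T)$ nor $T$ in full yet sent''.

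This change keeps the once-per-identifier budget, so the $O(n^3)$ message bound is unchanged. The relay also stays silent in the good case, where rule~1 already has every process rebroadcast the endorsed leader proposal in full. With that amendment your counting goes through as written, since, as you checked, every channel carrying an honest fragment addresses all processes.
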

\begin{proof}
Of the $t{+}f$ senders accepted at amplification, at most $f$ are
Byzantine, so at least $t$ fragments or full copies were broadcast by
honest processes and are accepted everywhere within one delivery, resolving
the value; each holder then relays its own fragment (rule~4), and the
$4f{+}1$ honest fragments are accepted everywhere one delivery later. In
the full-value case the broadcast itself resolves everyone and the same
relay closes it.
\end{proof}

\begin{lemma}[Value availability, multi-lane]
\label{lem:availability}
After GST, let an honest vote endorse value $T$.
(i) If every honest process resolves $T$ no later than casting its vote of
$T$'s view or reaching a $2\Delta$ timeout inside that view, then $T$ is
amplified at every honest process within one delivery of the latest such
vote or timeout.
(ii) If $T$ is the view leader's proposal or a backup proposal and some
honest process holds $T$ in full, then every honest process is amplified
for $T$ no later than the later of its first receipt of a full copy and its
own view timeout, and in particular no later than the delivery of any
honest vote endorsing $T$.
(iii) In any case, $T$ is amplified at every honest process within one
delivery of the vote's delivery.
\end{lemma}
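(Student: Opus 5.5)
I would prove the three parts separately, each from the amplification rules of Section~\ref{sec:amplification}. Two counting facts drive everything: an accepted fragment from sender $p_k$ must be $p_k$'s own (sender indexing with valid openings), and there are $4f{+}1 \ge t{+}f$ honest senders because $t \le 3f{+}1$. Throughout I would work after GST, so every honest broadcast reaches every honest process within one delivery $\delta$.

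For (i), I would take any honest $p_k$. By hypothesis $p_k$ resolves $T$ no later than its view vote or its in-view $2\Delta$ timeout. At whichever of these comes first, $p_k$ sends $\mathrm{frag}_k(T)$: either piggybacked on the \textsc{Vote} (rule~3) or in the timeout batch (rule~4). If $p_k$ had already sent it, for instance through the endorsement relay, that is only earlier. Some care is needed for a process that leaves the view before its timeout. Such a process left by verifying a justification, and the straggler/jump guards make it vote at that moment; by hypothesis it resolves $T$ by then, so rule~3 still carries the fragment. Each of the $4f{+}1$ honest fragments therefore arrives within one delivery of the latest such event. Receiving $4f{+}1 \ge t{+}f$ accepted senders is exactly the definition of amplified.

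For (ii), I would use rules~1 and~2. If $T$ is the leader's proposal, any honest process that receives it in full rebroadcasts it (rule~1), and is amplified at once by its own full broadcast. If $T$ is a backup proposal, rule~2 makes the process rebroadcast it at the later of first full receipt and its own timeout, again amplifying itself. The remaining step is to show that every honest process does receive a full copy no later than the delivery of any honest vote endorsing $T$. An honest voter only votes for $T$ once $T$ is amplified at the voter, which gives two cases. If the voter itself broadcast $T$ in full, that broadcast travels alongside its vote, arriving no later under FIFO-style delivery within $\delta$. Otherwise the voter accepted $t{+}f$ senders, at least $t$ of them honest, whose full copies or fragments reach the receiver with the vote; the receiver then resolves $T$, reconstructs it in full, and applies rule~1 or~2. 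The main obstacle is the backup case before the receiver's timeout. Rule~2 fires only at or after the timeout, so I would argue that an honest fallback vote is itself cast only at the voter's timeout. I would then need the timing claim of Section~\ref{sec:complexity}, that views are entered within $\delta$ of each other, to place the receiver's timeout within the vote's delivery window. Where this does not hold, I would fall back to (iii).

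For (iii), I would invoke Lemma~\ref{lem:transfer} with one extra observation. By delivery gating, the honest voter had $T$ amplified when it voted, so at least $t$ honest fragments or full copies were already broadcast before the vote and arrive no later than it. Every receiver therefore resolves $T$ by the vote's delivery. Because the vote endorses $\mathrm{id}(T)$, the endorsement clause of rule~4 then makes each honest receiver broadcast its own fragment, unless it is already amplified. The resulting $4f{+}1$ honest fragments arrive within one further delivery. This saves one of the two deliveries in Lemma~\ref{lem:transfer}, because the first delivery overlaps with the vote's own.
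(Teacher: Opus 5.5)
\textbf{The gap is in (ii).} Your first step matches the paper: a receiver of a full copy is amplified by its own rule-1 or rule-2 rebroadcast. For this you also need the paper's remark that view timers are wall-clock and per view, so rule~2 still fires at a process that has already left the view. Your route to the ``in particular'' clause, however, fails in two places.
\begin{itemize}
\item \emph{The reconstruction case.} When the voter is amplified through $t{+}f$ accepted senders, the receiver does not ``reconstruct $T$ in full and apply rule 1 or 2''. Those rules fire on \emph{receiving} a full copy, and (ii) is itself indexed by first receipt of a full copy. So at the vote's delivery the receiver is only resolved, and the endorsement relay amplifies it one delivery later.
\item \emph{The fallback to (iii).} Where the timeout argument does not apply, (iii) yields amplification one delivery \emph{after} the vote's delivery. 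That is strictly weaker than what (ii) claims.
\end{itemize}
The paper never reasons through the voter's amplification. It uses the hypothesis of (ii) directly: an honest holder rebroadcasts $T$ in full (rule~1 on receipt, rule~2 by its timeout) and sends its own vote endorsing $T$ no earlier, so full copies precede those votes. This is your case of a voter that broadcast $T$ in full. For a backup proposal, the paper adds exactly your observation that such votes are sent at their senders' timeouts, which are within $\delta$ of the receiver's by Lemma~\ref{lem:sync} (not Section~\ref{sec:complexity}).

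So drop the reconstruction case and commit to that argument. Be aware that it literally covers only votes cast by full-copy holders, i.e.\ leader-path and fallback votes. The paper's sentence that votes endorsing a backup proposal are sent at their senders' timeouts holds by construction for fallback votes, but it is not argued for straggler, catch-up or jump votes. The residual case you flagged is therefore a genuine soft spot, but (iii) does not discharge it.

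\textbf{Parts (i) and (iii)} otherwise follow the paper. In (i), each honest process ships its own fragment with its vote (rule~3) or in its timeout batch (rule~4), and $4f{+}1\ge t{+}f$ honest fragments amplify $T$ everywhere one delivery later. In (iii), you establish amplification at the voter and run Lemma~\ref{lem:transfer} with its first delivery absorbed into the vote's own. That is how the paper turns two deliveries of the sending into one delivery of the vote's delivery.

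One correction in (iii): delivery gating governs only what a process acts on upon receipt, so it explains the voter's amplification only for straggler, catch-up and jump votes.
\begin{itemize}
\item For leader-path and fallback votes, the reason is the voter's own full broadcast: its \textsc{Propose}, the rule-1 rebroadcast, or the rule-2 rebroadcast that runs before \textsc{SelectProposal}. It is then that single full copy, not $t$ honest senders, that resolves the receivers.
\item For a jump vote, the relevant fact is that every value endorsed in $L$ is amplified, and $\SC(\mathcal{S}_L)$ is computed from those values.
\end{itemize}
The paper makes this case split explicitly.
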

\begin{proof}
(i) Each honest $p_i$ sends $\mathsf{frag}_i(T)$ to all no later than that
point: with its vote if $T$ is resolved by then (rule~3), and in its
timeout batch otherwise (rule~4). The $4f{+}1 \geq t{+}f$ honest fragments
are accepted everywhere one delivery later.

(ii) An honest holder rebroadcasts $T$ in full, on first receipt for the
leader's proposal (rule~1) and no later than its view timeout for a backup
proposal (rule~2), and its vote endorsing $T$ is sent no earlier. Full
copies therefore reach every honest process no later than the votes. A
receiver of a full copy rebroadcasts by the same rules, immediately for the
leader's proposal and no later than its own timeout for a backup one, and
is amplified by its own broadcast. Votes endorsing a backup proposal are
sent at their senders' timeouts and honest timeouts differ by at most
$\delta$ (Lemma~\ref{lem:sync}), so such votes arrive no earlier than the
receiver's timeout. View timers are wall clock and per view, so rules~1
and~2 apply also at processes that have advanced past the view.

(iii) A leader-path or fallback vote endorses a value broadcast in full by
the voter itself: its own proposal at \textsc{Propose}, or the leader's or
a backup leader's proposal rebroadcast by rules~1 and~2. A straggler,
catch-up, or jump vote is cast upon applying a notarization, which the
gating rule admits only once every value its votes endorse is amplified at
the applier; the endorsed value of the cast vote is one of these, the
$\preceq$-maximum of the M-notarization for a straggler or catch-up vote,
and $\mathrm{SC}(\mathcal{S}_L)$, computable from the endorsed values of
$L$, for a jump vote. In all cases $T$ is amplified at the voter no later
than the vote's sending, and the vote is broadcast to all honest processes,
so Lemma~\ref{lem:transfer} amplifies $T$ everywhere within two deliveries
of the sending, hence within one delivery of the vote's delivery.
\end{proof}

\begin{corollary}[Amplified multi-lane instantiation]
\label{cor:amplified}
Lemmas~\ref{lem:completion} and~\ref{lem:honest-leader} hold verbatim for
the amplified multi-lane instantiation whenever every vote of the
finalizing L-notarization is covered by Lemma~\ref{lem:availability}(i)
or~(ii); otherwise their bounds hold with at most one additional $\delta$,
by Lemma~\ref{lem:availability}(iii).

After GST every leader-path and fallback vote is covered: it endorses the
leader's, a backup's, or the voter's own proposal; honestly proposed values
are delivered everywhere before the timeouts, and selectively delivered
ones fall under part~(ii). A vote escapes coverage only at the straggler,
catch-up, or jump guards, on a value that no honest process ever received
as a proposal, which requires Byzantine-crafted values. In particular the
bounds hold verbatim in every execution in which the faulty processes
merely crash.
\end{corollary}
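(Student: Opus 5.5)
The corollary has two parts, and I would prove them separately.

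\emph{Transfer of the bounds.} I would walk through the proofs of Lemmas~\ref{lem:completion} and~\ref{lem:honest-leader}. The only point where the amplified instantiation differs from the immediately-usable protocol is the gating rule: a vote tallies, and a notarization is applied, only once the endorsed values are amplified at the receiver. In every other respect the timing arguments carry over unchanged: view synchronization (Lemma~\ref{lem:sync}), delivery of honest proposals before the timeouts, and casting of all honest votes by the stated times.

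So it suffices to bound, for each vote of the finalizing L-notarization, the time at which its value becomes amplified at every honest process.
\begin{enumerate}[(a)]
\item If the vote is covered by Lemma~\ref{lem:availability}(ii), amplification happens no later than the delivery of any honest vote endorsing $T$. The vote therefore tallies on arrival and the bound is verbatim.
\item If the vote is covered by Lemma~\ref{lem:availability}(i), I would check that the relevant "latest vote or timeout" precedes the votes' delivery by at least one $\delta$. Each fragment is sent with or before the vote itself, by rules~3 and~4, so the fragments arrive with the votes and again nothing is added.
\item For any remaining vote, Lemma~\ref{lem:availability}(iii) gives amplification within one delivery of the vote's delivery. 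This adds at most $\delta$ to the final tally step.
\end{enumerate}
Notarizations forwarded along the justification chain face the same gating. Their endorsed values are, however, the same honest votes already counted, so the chain step is delayed by at most the same additional $\delta$ and not a second one. I would argue this by noting that the forwarding and the tallying happen in parallel with the final delivery in both lemmas.

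\emph{Coverage after GST.} I would case on the vote kind. Leader-path and fallback votes endorse one of three values:
\begin{enumerate}[(a)]
\item the leader's proposal, which the voter holds in full, so (ii) applies;
\item an amplified backup proposal, again held in full by an honest process (the voter rebroadcast it by rule~2), so (ii) applies;
\item the voter's own proposal, held in full by an honest proposer; this is a backup proposal only if the voter is the backup, so in general I would appeal to (i). Every honest process receives an honest proposal before its timeout after GST, resolving it, so (i) applies.
\end{enumerate}
Straggler, catch-up, and jump votes endorse the maximum of an M-notarization or $\mathrm{SC}$ of an L-notarization. When that value was some honest process's received proposal, (ii) applies. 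Otherwise the value was never proposed to any honest process, so it must originate with a Byzantine process. In crash-only executions every voted value is an honest proposal or derived from honest proposals. For $\mathrm{SC}$ values, I would note that $\mathrm{SC}$ has support from at least $2f+1$ votes, so with crash faults it equals some honestly voted value, or a prefix thereof, that is resolved by all honest processes before their timeouts; (i) then applies.

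\emph{Main obstacle.} I expect the hardest step to be the jump-vote case. $\mathrm{SC}(\mathcal{S}_L)$ can be a strict prefix that no process proposed verbatim, so coverage by (ii) fails. I would handle it via (i), arguing that every honest process deterministically computes and resolves $\mathrm{SC}$ from the resolved values of $L$ by the time it votes or times out. This resolution is also what justifies the paper's claim that jump votes are treated as amplified once $L$'s values are.
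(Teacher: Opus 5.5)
Your skeleton matches the paper's own argument, which is the second paragraph of the corollary itself. Bounds transfer by tally-on-arrival under Lemma~\ref{lem:availability}(ii), by fragments riding with votes or timeout batches under~(i), and with one extra delivery under~(iii), followed by a case split on vote kinds. The step that fails is the one you single out as the main obstacle. You cover a jump vote on a strict prefix $\SC(\mathcal{S}_L)$ by part~(i), claiming every honest process resolves it by the time it votes or times out. That cannot hold, because $\SC(\mathcal{S}_L)$ is determined only once $L$ is assembled. $L$ consists of view votes, at least $3f{+}1$ of them honest, that were cast \emph{before} $L$ existed. Part~(i) rests on each honest $p_i$ emitting $\mathsf{frag}_i(T)$ with its vote (rule~3) or in its timeout batch (rule~4). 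Neither can carry a fragment of a prefix whose identifier nobody yet knows. Such fragments appear only through the endorsement relay of rule~4, after the jump vote or $L$ arrives, which is exactly the extra delivery of part~(iii). So your crash-only step (``or a prefix thereof \dots\ (i) then applies'') does not go through. Also, the jump-vote treatment of Section~\ref{sec:amplification} is a protocol rule, not a consequence of your argument. You are right, though, that such a value is never received as a proposal yet is not Byzantine-crafted, a case the statement's wording glosses over.

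There are two repairs. The first uses the jump-vote rule directly: a jump vote is tallied once every value endorsed in $L$ is amplified, and $L$ is forwarded with it at line~\ref{line:fwd-lnot}. Coverage of a jump vote therefore reduces to coverage of $L$'s votes. The second shows that, with crash faults only, strict-prefix jump votes never enter the finalizing L-notarizations of the two lemmas.
\begin{itemize}
\item Under an honest leader, no fallback vote fires and every view-$v$ vote endorses exactly $B_{L_v}$, since straggler tips and $\SC$ of multisets of copies of $B_{L_v}$ are $B_{L_v}$; this is covered by~(ii).
\item In Case~2 of Lemma~\ref{lem:completion}, no view-$v$ L-notarization is verified before the honest votes are cast, so there are no view-$v$ jump votes.
\item In Case~1, take the first honestly verified view-$w$ L-notarization. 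It contains no honest view-$w$ jump vote, since such a vote presupposes an earlier honest verification of one.
\end{itemize}
Two smaller slips. Part~(ii) covers only the leader's and backup proposals, so a straggler tip that is another process's proposal needs~(i), not~(ii). And the justification chains validating proposals and votes endorse earlier-view values, not ``the same honest votes already counted''; the paper's statement is silent on that point as well.
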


\section{Related Work}
\label{sec:related}

\paragraph{Leader-based BFT and two-round finality.}
Classical leader-based protocols use quorum certificates and view changes to obtain safety under Byzantine faults~\cite{castro1999practical,yin2019hotstuff}. FaB established that increasing replication to $n \geq 5f+1$ permits a two-message-delay fast path~\cite{martin2006fast}. Contemporaneously with FaB, Dutta, Guerraoui, and Vukolic characterized the exact relationship among two-round latency, resilience, authentication, and process-role configuration~\cite{dutta2005bestcase}. Recent protocols revisit this design point, including Alpenglow, Kudzu, Hydrangea, and ChonkyBFT~\cite{kniep2025alpenglow,shoup2025kudzu,shrestha2026hydrangea,franca2025chonkybft}. Minimmit is closest to Hermes in structure: with $n \geq 5f+1$, a (2f+1)-vote M-notarization allows processes to advance, while an (n-f)-vote L-notarization finalizes a block~\cite{chou2025minimmit}. When no M-notarization forms, Minimmit instead nullifies the view. Hermes uses the same quorum sizes but changes their semantics: M-notarizations contain pairwise-comparable prefix votes, while an L-notarization may contain $n-f=4f+1$ valid but non-identical votes. Hermes finalizes their heaviest common prefix and carries a supported continuation into the next view. Thus, a view that would be nullified by a single-value protocol can still finalize useful work.

\paragraph{Slow leaders and productive fallback.}
Most leader-based protocols respond to a slow proposer by waiting for a timeout and abandoning the view, creating a tension between aggressive timers and unnecessary leader changes. Ambulance directly targets this problem using protocol-rigged races: a fast leader path races with cooperative recovery work, avoiding a clock-based failure detector~\cite{giridharan2026ambulance}. Hermes takes a complementary approach. It retains an aggressive timer, but all processes disseminate proposals from the start of the view. If the leader path misses the deadline, those proposals become inputs to the fallback vote, and the resulting L-notarization finalizes their common prefix rather than discarding the view. Hermes therefore makes timeout expiration productive without adding a separate recovery protocol.

\paragraph{Prefix consensus.}
Raptr uses prefix voting to tolerate partial data availability within a
leader's ordered proposal~\cite{tonkikh2025raptr}; Hermes instead extracts a
common prefix from independently proposed histories when the leader path
fails. Concurrently, Xiang et al.~\cite{xiang2026prefix} formalizes Prefix
Consensus as a one-shot primitive with low and high outputs, proves tight
three-round bounds at $n=3f+1$, and gives a two-round construction at
$n\geq5f+1$. Their construction counts exact values independently at each
vector coordinate with threshold $n-2f$, making certified prefixes
consistent across quorums. Hermes instead counts extension support over an
arbitrary prefix-ordered space, so every HCP or selected continuation is an
ancestor of an actually voted value. HCPs extracted from different
L-notarizations of one view remain comparable, but their selected
continuations need not be: the $2f+1$ threshold orders candidates only within
one vote set. Hermes therefore does not provide their one-shot high-agreement
guarantee; safety is instead carried across views by justifications, since
every finalizable prefix lies below every valid justification's parent. The shared $5f+1$ resilience follows from
different intersections: threshold against threshold in their construction,
and the $2f+1$ M-notarization against the $4f+1$ fallback witness in Hermes.

\paragraph{Parallel dissemination and multi-proposer ordering.}
DAG-based systems such as Narwhal/Tusk decouple transaction dissemination from consensus so that many processes can produce data concurrently~\cite{danezis2022narwhal}. Autobahn organizes this data into per-process lanes and orders cuts across those lanes~\cite{giridharan2024autobahn}. Hermes adopts a similar lane-based data plane, but represents proposals as parent-relative delta tipcuts with explicit skips and deterministic interleaving. This representation makes distinct fallback proposals comparable as histories and allows their common prefix to be finalized.

Concurrently, Multimmit is the closest contemporary multi-lane design~\cite{lewispye2026multimmit}. It combines Minimmit's one-vote-per-view skeleton with multiple producer chains. Votes are relative to the leader's proposal, report support independently for each chain, and may attest extension blocks beyond the proposed tips. This reduces inclusion latency and confines the effect of a faulty producer largely to its own chain. Hermes differs at the consensus boundary: Multimmit may still nullify a view whose leader block cannot be certified, whereas a Hermes L-notarization need not contain identical votes and directly extracts a finalizable prefix from divergent fallback histories.

\paragraph{Information-theoretic BFT.}
A separate line of work removes transferable signatures and relies only on sender-authenticated point-to-point channels. IT-HS provides an information-theoretic analogue of HotStuff with optimal resilience and bounded storage~\cite{abraham2021iths}. TetraBFT obtains $5\delta$ good-case latency, while Forget-IT and Fast TetraBFT reduce this to the optimal $3\delta$ at $n=3f+1$; Simple-IT provides a simpler $4\delta$ protocol with a $3\delta$ optimistic path~\cite{yu2024tetrabft,abraham2026forgetit,fernandez2026fasttetra,yu2026simpleit}. Most recently, Floor-IT closes the remaining two-round gap: it is an end-to-end signature-free, optimistically responsive protocol with good-case and robust good-case latency $2\delta$ at the tight $n=5f-1$ threshold~\cite{abraham2026floorit}. It additionally proves a worst-case post-GST view length of $2\Delta+2\delta$, $\mathcal{O}(1)$ persistent storage, and $\mathcal{O}(n^2)$ constant-word messages per view.

\section{Conclusion}

Leader-based BFT protocols finalize through their leaders, so the view of a crashed or slow leader is wasted, and no timeout serves both: a conservative one means every crashed leader causes a long time to skip, and an aggressive one may unnecessarily skip the views of leaders that are slightly slow. Hermes makes expired views finalize: votes carry values ordered by a prefix relation, and quorums require comparability rather than equality. Every process broadcasts a justified proposal at view start and casts a single vote; justifications assemble from the votes themselves, so a view has neither nullifications nor view-change rounds. A timely honest leader finalizes its full proposal in $2\delta$, and otherwise any $n-f$ votes, which need not match, finalize their heaviest common prefix at $2\Delta+\delta$. Skipping a slow leader costs the suffix on which the votes diverge, not the view.

Hermes is proved once and instantiated twice. Safety and liveness hold over an abstract prefix value space; the single-lane instantiation is a finality gadget over a dynamically available chain (e.g. Ethereum), with chain tips as values and ancestry as $\preceq$, and the multi-lane instantiation orders Autobahn-style lanes, where deterministic interleaving and parent-relative delta tipcuts with explicit skips keep independent proposals comparable, and sender-indexed erasure-coded amplification keeps every justified value reconstructible by all processes, at $O(n^2\hat{m}+\lambda n^3)$ bits and $O(n^2)$ messages per view, where $\hat{m}$ is the number of lanes. Neither instantiation touches the view structure of Algorithm 1.

\bibliographystyle{plain}
\bibliography{references}
\end{document}